\theoremstyle{definition}
\newtheorem{lemma}{\bf Lemma}
\newtheorem{corollary}{\bf Corollary}
\theoremstyle{remark}
\newtheorem{remark}{\bf Remark}
\acrodef{ofdm}[OFDM]{orthogonal frequency division multiplexing}%
\acrodef{miso-ofdm}[MISO-OFDM]{multi-input single-output orthogonal frequency division multiplexing}%
\acrodef{ris}[RIS]{reconfigurable intelligent surface}%
\acrodef{qos}[QoS]{quality of service}%
\acrodef{idft}[IDFT]{inverse discrete Fourier transform}%
\acrodef{dft}[DFT]{discrete Fourier transform}%
\acrodef{cp}[CP]{cyclic prefix}%
\acrodef{csi}[CSI]{channel state information}%
\acrodef{awgn}[AWGN]{additive white Gaussion noise}%
\acrodef{qcqp}[QCQP]{quadratically constrained quadratic program}%
\acrodef{qp}[QP]{quadratic program}%
\acrodef{bs}[BS]{base station}%
\acrodef{ap}[BS]{base station}%
\acrodef{aps}[APs]{access points}%
\acrodef{qos}[QoS]{quality of service}%
\acrodef{ue}[UE]{user equipment}%
\acrodef{snr}[SNR]{signal-to-noise ratio}%
\acrodef{mmwave}[mmWave]{millimeter-wave}%
\acrodef{snr}[SNR]{signal-to-noise ratio}%
\acrodef{sinr}[SINR]{signal-to-interference-plus-noise ratio}%
\acrodef{ser}[SER]{symbol error rate}%
\acrodef{rc}[RC]{reflection coefficient}%
\acrodef{uavs}[UAVs]{unmanned aerial vehicles}%
\acrodef{mimo}[MIMO]{multiple-input multiple-output}%
\acrodef{noma}[NOMA]{non-orthogonal multiple access}%
\acrodef{ace}[ACE]{adaptive cross-entropy}%
\acrodef{wsr}[WSR]{weighted sum-rate}%
\acrodef{udn}[UDN]{ultra-dense network}%
\acrodef{Udn}[UDN]{Ultra-dense network}%
\def\BibTeX{{\rm B\kern-.05em{\sc i\kern-.025em b}\kern-.08em
    T\kern-.1667em\lower.7ex\hbox{E}\kern-.125emX}}
\begin{document}
\title{Multiple Access for Near-Field Communications: SDMA or LDMA?}
\author{\IEEEauthorblockN{Zidong Wu and Linglong Dai, \emph{Fellow, IEEE}}

\thanks{Manuscript received 24 October 2022; revised 18 February 2023; accepted 10 April 2023. Date of publication 12 May 2023; date of current version 19 June 2023. This work was supported in part by the National Key Research and Development Program of China (Grant No. 2020YFB1807201) and in part by the National Natural Science Foundation of China (Grant No. 62031019). An earlier version of this paper was presented in part at the IEEE ICC, Rome, Italy, in May 2023. (Corresponding author: {\emph{Linglong Dai}}.)}
\thanks{The authors are with the Department of Electronic Engineering, Tsinghua University, Beijing 100084, China, and also with the Beijing National Research Center for Information Science and Technology (BNRist), Beijing 100084, China. (e-mails: wuzd19@mails.tsinghua.edu.cn, daill@tsinghua.edu.cn).}
\thanks{Color versions of one or more figures in this article are available at
https://doi.org/10.1109/JSAC.2023.3275616.}
\thanks{Digital Object Identifier 10.1109/JSAC.2023.3275616}
}
\maketitle
\begin{abstract}
Spatial division multiple access (SDMA) is essential to improve the spectrum efficiency for multi-user multiple-input multiple-output (MIMO) communications. The classical SDMA for massive MIMO with hybrid precoding heavily relies on the angular orthogonality in the far field to distinguish multiple users at different angles, which fails to fully exploit spatial resources in the distance domain. With the dramatically increasing number of antennas, the extremely large-scale antenna array (ELAA) introduces additional resolution in the distance domain in the near field. In this paper, we propose the concept of location division multiple access (LDMA) to provide a new possibility to enhance spectrum efficiency compared with classical SDMA. The key idea is to exploit extra spatial resources in the distance domain to serve different users at different locations (determined by angles and distances) in the near field. Specifically, the asymptotic orthogonality of near-field beam focusing vectors in the distance domain is proved, which reveals that near-field beam focusing is able to focus signals on specific locations with limited leakage energy at other locations. This special property could be leveraged in hybrid precoding to mitigate inter-user interferences for spectrum efficiency enhancement. Moreover, we provide the spherical-domain codebook design method for LDMA communications with the uniform planar array, which provides the sampling method in the distance domain. Additionally, performance analysis of LDMA is provided to reveal that the asymptotic optimal spectrum efficiency could be achieved with the increasing number of antennas. Finally, simulation results verify the superiority of the proposed LDMA over SDMA in different scenarios.
\end{abstract}

\begin{IEEEkeywords}
Spatial division multiple access (SDMA), massive MIMO, extremely large-scale antenna array (ELAA), near-field, location division multiple access (LDMA).
\end{IEEEkeywords}

\section{Introduction}
\par To meet the dramatically increasing demand for data transmission, spectrum efficiency has always been considered as one of the most important key performance indicators (KPIs) for communications \cite{Navrati'16'}. In the fifth generation (5G) networks, by employing dozens or hundreds of antennas at the base station (BS) to simultaneously serve multiple user equipments (UEs), massive multiple-input multiple-output (MIMO) is the key enabler for achieving the remarkable growth in spectrum efficiency \cite{Zhangrui'14'j,Marzetta'14'j,Jackb'20'j}. In 5G massive MIMO communications, spatial division multiple access (SDMA) is indispensable to support multi-stream transmissions with multiple UEs. Through SDMA, different UEs are able to establish parallel transmission links with BS by exploiting different spatial resources, which significantly boosts spectrum efficiency.
\par With the requirement of a 10-fold increase in spectrum efficiency for future 6G communications, massive MIMO technology will evolve into extremely large-scale antenna array (ELAA) equipped with thousands of antennas. Thanks to the significantly increased number of antennas, ELAA is capable of further improving the spectrum efficiency by orders of magnitude for 6G~\cite{Fredrik'21'j}. The research on ELAA has attracted significant attention from academia and industry, and the implementations of ELAA have been validated in over-the-air channel measurement platforms~\cite{wangchengxiang'22'vtc,Karstensen'22'tap}.
\subsection{Prior Works}\label{sec:intro prior}
\par From MIMO for 4G, massive MIMO for 5G, and ELAA for future 6G, people are seeking the improvement of spectrum efficiency mainly by increasing the number of antennas to support more parallel data transmissions under the SDMA framework. For SDMA, precoding plays a central role in forming high-gain beams and managing inter-user interferences to enhance spectrum efficiency. There are plenty of fully digital precoding techniques to realize SDMA in MIMO systems~\cite{Marzetta'10,Tufvesson'13,Debbah'13,Vincent'19,Rao'17}. Specifically, in~\cite{Marzetta'10}, a matched-filter (MF) precoding method was proved to be able to achieve optimal spectrum efficiency with independently and identically distributed channels for an infinite number of UEs. Then, it was shown in~\cite{Tufvesson'13} that the zero-forcing (ZF) precoding could achieve higher spectrum efficiency than MF precoding when the number of transmit antennas is much larger than the number of UEs. To avoid the ill-posed inversion process brought by the ZF precoding, regularized zero-forcing (RZF) precoding scheme was introduced to improve the system performance when the number of transmit antennas is not far beyond the served UEs~\cite{Debbah'13,Vincent'19}.

\par With the increasing number of antennas, the power consumption and hardware cost of fully-digital precoding become overwhelming for massive MIMO and ELAA systems~\cite{Gao'16'j,Wangzhao'16'j}. To significantly reduce power consumption, the hybrid precoding architecture has been widely considered~\cite{Ayach'14'j}, where energy-efficient phase shifters are employed to realize analog precoding to reduce the number of radio frequency (RF) chains compared with fully-digital precoding. In hybrid precoding architectures, SDMA is usually realized by the joint design of analog and digital precoders. Since the analog precoding is realized by phase shifters, the constant modulus constraints of phase shifters impose difficulties on the analog precoder designs, which has become the main challenge for SDMA.

\par The analog precoder design methods to realize SDMA can be mainly classified into two categories: Optimization-based scheme and beam-steering-based scheme. In the first category, optimization tools are employed to overcome the constant modulus constraints to design analog precoders. Specifically, an alternate optimization of the analog precoder and the ZF-based digital precoder was proposed in~\cite{Yu'16'j2} to maximize the spectrum efficiency for SDMA. To further ensure the convergence, RZF-based digital precoding was considered in~\cite{Park'17'j2}, and an alternate optimization was performed to optimize the analog precoder and the compensation matrix, which is a component of the digital precoder. Nevertheless, those optimization-based schemes often involve high complexity, which is difficult to be applied to practical wireless systems.

\par To reduce the complexity, the other category of beam-steering methods usually employs the beam steering vectors to construct the analog precoder~\cite{Raghavan'17'j,Xiao'15'j,Heath'15'j}. By exploiting the sparsity of massive MIMO channels, particularly at high frequencies, the beam steering vectors corresponding to directional beams can be directly utilized to design analog precoders without complicated optimization tools, which could also approach the optimal spectrum efficiency in single-user scenarios~\cite{Ayach'14'j}. Moreover, the near-optimal single-user hybrid precoding design was generalized to the more general multi-user cases in~\cite{Raghavan'17'j}, where the authors pointed out that the directional beams can be also exploited to increase the spectrum efficiency in SDMA, although some inter-user interferences will be introduced.

\par To further eliminate the inter-user interferences, a practical two-stage multi-user precoding scheme was proposed in~\cite{Heath'15'j} to employ the steering beams to construct analog precoders. In the first stage, the analog precoder is selected from a predefined beam codebook to maximize the received signal power. Normally, the discrete Fourier transform (DFT) codebook is adopted to generate beams aligned with specific directions. In this stage, different UEs can be partially distinguished and served by different directional beams, and the inter-user interferences can be partially alleviated. Then, in the second stage, the digital precoder is designed to further eliminate the remained inter-user interferences. Moreover, owing to the asymptotic orthogonality of different directional beams in the angular domain, the beam steering vectors adopted are asymptotically orthogonal when the number of antennas tends to infinity\cite{Xiao'15'j}. That is to say, the received signal power could be maximized while inter-user interferences could be naturally eliminated in SDMA for a large number of antennas~\cite{Sohrabi'17'jsac}. Due to the low complexity and asymptotic angular orthogonality of beams, the beam-steering-based schemes are widely considered in 5G and are expected to be applied in future 6G.

\par Following such methods, to further enhance the spectrum efficiency, communication systems commonly rely on the consistent increase of the number of antennas to provide thinner beams, where analog precoding could eliminate inter-user interference more thoroughly. Although hybrid precoding is considered to be energy-efficient, the dramatic increase of antennas will bring unaffordable energy consumption. Besides increasing the number of antennas only, is there any other new possibility to significantly boost the spectrum efficiency?

\subsection{Our Contributions}\label{sec:intro contr}
Inspired by near-field communications recently investigated in~\cite{cui'22'm}, this paper tries to exploit the extra resolution in the distance domain brought by near-field beams to enhance the spectrum efficiency. Specifically, the transition from massive MIMO to ELAA implies that many communications happen in \emph{spherical-wave}-based near-field regions, instead of classical \emph{planar-wave}-based far-field regions. Owing to the different electromagnetic wave propagation models, unlike the far-field beams focusing signal energy on a certain angle, near-field beams are capable of focusing signal energy on a specific location~\cite{Heath'22'j}, which could be leveraged to mitigate inter-user interference and therefore improve the spectrum efficiency.
\par In this paper, the concept of location division multiple access (LDMA) is proposed, which provides a new possibility to exploit spatial resources to improve spectrum efficiency performance compared with classical SDMA\footnote{Simulation codes are provided to reproduce the results in this paper:
http://oa.ee.tsinghua.edu.cn/dailinglong/publications/publications.html.}. The contributions are summarized as follows:
\begin{itemize}
	\item The LDMA communication scheme is proposed in this paper. The key idea is to exploit the energy-focusing property of near-field beams to serve different UEs located at different angles and different distances, i.e. locations, to fully exploit the extra resources in the distance domain to improve the system performance. In classical SDMA, more antennas are required to increase the angular focusing property of beams to suppress the interference for spectrum efficiency improvement. On the contrary, benefiting from the extra focusing property of near-field beams in the distance domain~\cite{Cui'22'tcom,Yonina'22'm}, UEs located not only at different angles but also at different distances can also be served by near-field beams without serious interference. The proposed LDMA scheme poses another dimension to improve the spectrum efficiency compared with classical SDMA.

	\item Under the LDMA framework, similar to the asymptotic orthogonality of far-field beams in the angular domain~\cite{Xiao'15'j}, the asymptotic orthogonality of near-field beams in the extra distance domain is investigated. For different array structures, we prove the property that as the number of antennas tends to infinity, the correlation of near-field beams focusing on the same angle but different distances converges to zero based on Fresnel approximation. The extra asymptotic orthogonality of near-field beams in the distance domain ensures the theoretical feasibility of LDMA with large antenna arrays.

	\item To efficiently utilize the extra orthogonality of near-field beams in the distance domain proved above, a three-dimensional (3D) near-field codebook for uniform planar array (UPA) systems is designed to distinguish different UEs. The classical far-field DFT codebook only performs sampling in different azimuth and elevation angles. In addition to the uniform angular sampling, the near-field codebook could characterize the correlation of near-field beams in the distance domain to perform efficient distance sampling. Different from the uniform angular sampling, we provide a non-uniform sampling criterion in the distance domain for UPA systems.

	\item Finally, by virtue of the asymptotic orthogonality of near-field beams both in angular and distance domains, the performance analysis is provided to reveal the asymptotic optimality of the LDMA scheme. That is to say, when the number of antennas tends to infinity, the spectrum efficiency of LDMA could approach the interference-free scenarios. Simulation results are also provided to verify the superiority of the LDMA scheme compared with classical SDMA by providing a new possibility to enhance the spectrum efficiency.

\end{itemize}
\subsection{Organization and Notation}\label{sec:intro org}
The remainder of the paper is organized as follows. Section~\ref{sec: sys} introduces the system model and near-field beam focusing vectors for uniform
linear array (ULA) and UPA systems. Section~\ref{sec: analysis} theoretically investigates the asymptotic orthogonality of near-field beams. In Section~\ref{sec: LDMA}, the LDMA scheme is illustrated and the spherical-domain codebook design method is provided. The performance analysis is detailed in Section~\ref{sec: per ana}. Simulation results are provided in Section~\ref{sec: sim}, and conclusions are drawn in Section~\ref{sec: conclusion}.

\textit{Notations}: $\mathbb{C}$ denotes the set of complex numbers; ${[\cdot]^{-1}}$, ${[\cdot]^{T}}$, ${[\cdot]^{H}}$ and ${\rm diag}(\cdot)$ denote the inverse, transpose, conjugate-transpose and diagonal operations, respectively; $\|\cdot\|_F $ denotes the Frobenius norm of a matrix; $|\cdot|$ denotes the norm of its complex argument; $\lesssim$ denotes approximately less or equal to; $\mathbf{I}_{N}$ is an $N\times N$ identity matrix.

\section{System Model}\label{sec: sys}
In this section, we introduce the hybrid precoding architecture for ELAA systems. The classical far-field and near-field channel models are also investigated and compared.
\subsection{System Model}\label{sec: sys sys}
We consider a narrowband time division duplexing (TDD) ELAA single-cell communication scenario in this paper. The BS is equipped with an $N$-antenna ELAA and $N_{\rm{RF}}$ RF chains, where the hybrid precoding architecture is employed and $N_{\rm{RF}} \leq N$ is satisfied. The BS aims to simultaneously serve $K$ single-antenna UEs in the cell, which requires $N_{\rm{RF}} \geq K$. For analysis simplicity, $N_{\rm{RF}} = K$ is assumed throughout the paper. To enable the multiple access for different UEs, the uplink and downlink transmissions are both considered. The downlink system model is first introduced and the uplink channel can be obtained by transposing the downlink model according to the channel reciprocity~\cite{Xiao'15'j}.

\begin{figure}[!t]
	\centering
	\setlength{\abovecaptionskip}{0.cm}
	\includegraphics[width=3in]{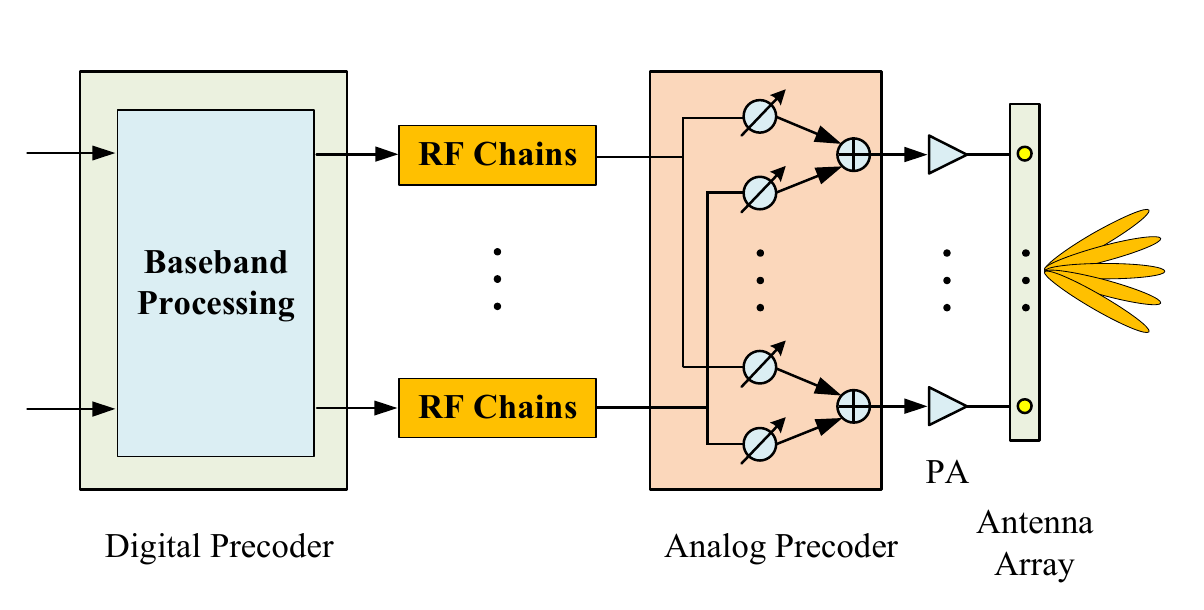}
	\caption{Classical hybrid precoding architecture of massive MIMO communication systems.}
	\label{img: sys model}
\end{figure}

A traditional massive MIMO millimeter-wave (mmWave) communication system can be modeled as in Fig.~\ref{img: sys model}. The received signal for all $K$ UEs can be represented as
\begin{equation}
\label{eq: downlink received signal}
\begin{aligned}
{\bf{y}}^{\rm{DL}} = {\bf{H}} {\bf{F}}_{\rm{A}} {\bf{F}}_{\rm{D}} {\bf{s}} + {\bf{n}},
\end{aligned}
\end{equation}
where ${\bf{y}}^{\rm{DL}} = [y_1, y_2, \cdots, y_K]^T$ denotes the $K \times 1$ received signals for all UEs, ${\bf{H}} = [{\bf{h}}_1, {\bf{h}}_2, \cdots, {\bf{h}}_K]^H$ denotes the downlink channel matrix, and ${\bf{h}}_k$ denotes the channel vector between the base station and the $k^{th}$ UE. The signal vector ${\bf{s}}$ satisfying the power constraint $\mathbb{E}[{\bf{s}}{\bf{s}}^H] = {\bf{I}}$ is transmitted to all UEs. The precoding matrices contain two separate components, i.e., digital precoder ${\bf{F}}_{\rm{D}}$ and analog precoder ${\bf{F}}_{\rm{A}}$. Finally, ${\bf{n}} \sim \mathcal{CN}(0, \sigma_n^2{\bf{I}}_{K})$ is the additive white Gaussian noise (AWGN), where $\sigma_n^2$ denotes the variance of the noise.  
\par The overall spectrum efficiency could be expressed as
\begin{equation}
\label{eq: spectrum efficiency}
\begin{aligned}
R =  \sum_k R_k = \sum_k \log_2\left(1+\frac{p_k|{\bf{h}}_k^H{\bf{F}}_{\rm{A}}{\bf{f}}_{{\rm{D}},k}|^2}{\sigma_n^2 + \sum_{l \neq k} p_l|{\bf{h}}_k^H{\bf{F}}_{\rm{A}}{\bf{f}}_{{\rm{D}},l}|^2}\right),
\end{aligned}
\end{equation}
where $p_k$ denotes the power allocated to the $k^{th}$ UE, ${\bf{f}}_{{\rm{D}},k}$ denotes the $k^{th}$ column of digital precoder ${\bf{F}}_{\rm{D}}$.

\subsection{Far-Field Channel Model}\label{sec: sys channel far}
The wireless channel can be constructed by either the far-field model~\cite{Ayach'14'j} or the near-field model~\cite{Cui'22'tcom}. Rayleigh distance is commonly adopted to identify the boundary between far-field and near-field regions, which is defined as $r_{\rm{RD}} = \frac{2D^2}{\lambda}$~\cite{Sherman'62'j}\footnote{Apart from Rayleigh distance, recently several criteria have been proposed to partition the far-field and near-field region more accurately, including the effective Rayleigh distance~\cite{cui'21} and Bj{\"o}rnson distance~\cite{emil'22}. To ensure the conciseness of expressions, we adopt the classical Rayleigh distance to characterize the near-field region in this paper.}. In 5G massive MIMO communications, since the aperture of antenna array is not very large, the Rayleigh distance is very limited. For instance, the Rayleigh distance of a 32-element array at 30 GHz is 5 meters, which only covers a little part of a cell. Therefore, the UEs are commonly located out of the Rayleigh distance of BS, i.e. far-field region, where the channels can be approximately modeled based on the far-field planar-wave propagation model~\cite{Sherman'62'j}. Owing to the sparse scattering environment of mmWave frequency~\cite{ma'21'j}, the far-field channel can be modeled as
\begin{equation}
\label{eq: far-field channel}
\begin{aligned}
{\bf{h}}_k^{\rm{far}} = \sqrt{N} \alpha_{0} {\bf{a}}(\theta_0, \phi_0) + \sqrt{\frac{N}{L}} \sum_{l=1}^{L}\alpha_{l} {\bf{a}}(\theta_l, \phi_l),
\end{aligned}
\end{equation}
which contains one line-of-sight (LoS) path and $L$ non-line-of-sight (NLoS) paths. Parameters $\alpha_l$, $\theta_l$, and $\phi_l$ denote the complex path gain, elevation angle, and azimuth angle of the $l^{th}$ path, respectively. The index $l=0$ represents the LoS path, while $l \geq 1$ represents the NLoS paths. The complex gain for the LoS path $\alpha_0$ can be expressed as $\alpha_0 = \sqrt{\frac{\kappa}{\kappa+1}}$ while the channel gains for NLoS paths follow $\alpha_{l} \sim \mathcal{CN}(0, \sigma_{\alpha,l}^2)$ for $l \geq 1$, where $\sigma_{\alpha,l}^2 = \frac{1}{\kappa+1}$. The Rician factor $\kappa$ denotes the power ratio of LoS and NLoS paths. Due to the planar-wave propagation model, the far-field steering vector ${\bf{a}}(\theta_l, \phi_l)$ is only determined by the transmitting or receiving angles for a fixed array structure. For the widely adopted ULA, the steering vector corresponding to an $N$-element ULA along the $y$-axis can be written as 
\begin{equation}
\label{eq: ula}
\begin{aligned}
{\bf{a}}_{\rm{L}} (\phi) = \frac{1}{\sqrt{N}} \left[1, e^{jkd \sin\phi}, \cdots, e^{jk(N-1)d \sin\phi}\right]^T,
\end{aligned}
\end{equation}
where $k = \frac{2\pi}{\lambda}$ denotes the wavenumber, and $d$ denotes the spacing of adjacent antenna elements. It is worth noting that changing the elevation angle $\theta$ does not influence the element of ULA's steering vector~\cite{Ayach'14'j}. Thus, $\theta$ is omitted in the arguments of ${\bf{a}}_{\rm{L}}(\cdot)$ for simplicity. Similarly, for UPA with $N = N_1 \times N_2$ elements placed in the $yz$-plane, the steering vector can be written as~\cite{yan'22'j}
\begin{equation}
\label{eq: upa}
\begin{aligned}
{\bf{a}}_{\rm{P}} (\theta, \phi) & =  {\bf{a}}_{\rm{L}}^{(y)} \otimes {\bf{a}}_{\rm{L}}^{(z)}\\
& = \frac{1}{\sqrt{N}} \Big[1, \cdots, e^{jkd (n_1\sin\phi \sin\theta+n_2 \cos\theta)}, \\
& \cdots, e^{jkd((N_1-1) \sin\phi \sin\theta + (N_2-1)\cos\theta)} \Big]^T,
\end{aligned}
\end{equation}
where $N_1$ and $N_2$ denote the number of antenna elements on the $y$-axis and $z$-axis, respectively.
\par The beam steering vectors above are based on the planar-wave propagation assumption in the far-field region. However, since the number of antenna elements significantly increases in ELAA systems, the near-field region will also be enlarged. Specifically, as the number of antennas scales up, the near-field region determined by the Rayleigh distance $r_{\rm{RD}} = \frac{2D^2}{\lambda}$ for ELAA systems remarkably expands. For instance, the Rayleigh distance of a 3200-antenna array in~\cite{Balakrishnan'20'c} can reach about 200 meters. Most areas of a cell will be classified into the near-field region, where the electromagnetic field has to be modeled by spherical waves. Therefore, the classical beam steering vectors in~\eqref{eq: ula} and~\eqref{eq: upa} are no longer valid for near-field communications.

\subsection{Near-Field Channel Model}\label{sec: sys channel near}
\par To characterize the near-field channel for ELAA systems, the near-field channel model adopting the point scatterer assumption can be expressed as~\cite{Cui'22'tcom}
\begin{equation}
\label{eq: near-field channel}
\begin{aligned}
{\bf{h}}_k^{\rm{near}} = \sqrt{N} \alpha_{0} {\bf{b}}(r_0, \theta_0, \phi_0) + \sqrt{\frac{N}{L}} \sum_{l=1}^{L}\alpha_{l} {\bf{b}}(r_l, \theta_l, \phi_l),
\end{aligned}
\end{equation}
where ${\bf{b}}(r_l, \theta_l, \phi_l)$ denotes the near-field beam focusing vector, which focuses the signal energy at the location indexed by $(r_l, \theta_l, \phi_l)$. Again, $l=0$ represents the LoS channel component, and $l\geq 1$ represents the NLoS components. To emphasize the different properties of near-field beams, we term the single-path near-field channel as the \emph{beam focusing vector}, which is opposite to the classical \emph{beam steering vector} defined in~\eqref{eq: ula} and~\eqref{eq: upa} in the far-field region.

\par As shown in Fig.~\ref{img: channel ULA}, the near-field beam focusing vector for an $N$-element ULA can be expressed as
\begin{equation}
\label{eq: near field response}
\begin{aligned}
{\bf{b}}_{\rm{L}}(r, \phi) = \frac{1}{\sqrt{N}}\left[e^{-jk(r^{(-\widetilde{N})}-r)},\cdots,e^{-jk(r^{(\widetilde{N})}-r)}\right]^T,
\end{aligned}
\end{equation}
where $r^{(n)}$ denotes the distance between the scatterer (or UE) and the $n^{\rm th}$ antenna element, and $r$ denotes the distance between the scatterer (or UE) and the center of the array\footnote{Note that the polarization of each antenna also needs to be precisely configured to realize alignments of signals at receivers~\cite{myers2022near}. Since this paper mainly focuses on precoding designs, an ideal polarization configuration is assumed. Therefore, the factor of polarization is omitted throughout the paper.}. The maximum index is defined as $\widetilde{N} = \frac{N-1}{2}$, and $N$ is assumed to be odd. The distance term $r_l^{(n)}$ of the $l^{\rm th}$ path can be rewritten based on the spherical-wave propagation model as
\begin{equation}
\label{eq: near field distance term}
\begin{aligned}
r_{l}^{(n)} &= \sqrt{r_l^2-2ndr_l\sin\phi_l+n^2d^2} \\
& \mathop{\approx}\limits^{(a)} r_l\left(1+\frac{1}{2}\epsilon^2 - \frac{1}{8}\epsilon^2 \right)\\
& \mathop {\approx}\limits^{(b)} r_l - nd\sin\phi_l + \frac{n^2d^2}{2r_l}\cos^2\phi_l\\
& = r_l + \psi_{{\rm{L}}, r_l, \phi_l}^{(n)},
\end{aligned}
\end{equation}
where $\epsilon = -\frac{2nd\sin\phi_l}{r_l}+\frac{n^2d^2}{r_l^2}$ denotes the small quantity compared with $r_l$ and $\psi_{{\rm{L}}, r_l, \phi_l}^{(n)} = -nd\sin\phi_l + \frac{n^2d^2}{2r_l}\cos^2\phi_l$ denotes the difference of the propagation distance on the $l^{th}$ path compared with the ULA center. Approximation (a) is derived by the second-order Taylor series expansion $\sqrt{1+x} = 1 + \frac{x}{2} - \frac{x^2}{8} + \mathcal{O}(x^2)$. It has been proved in~\cite{Janaswamy'17'm} that, when communication distance exceeds Fresnel boundary $r_{\rm{FR}} = \frac{D}{2}\left(\frac{D}{\lambda}\right)^{1/3}$, the second-order expansion is commonly accurate enough in the near-field region. Moreover, approximation (b) is obtained by assuming that $r_l$ is much larger than $nd$ even in the near-field region, since communication distance is commonly larger than the array aperture. Therefore, we only keep the first and second-order of $\frac{nd}{r_l}$, which could obtain a concise but accurate approximation as~\cite{Cui'22'tcom}.

\begin{figure}[!t]
	\centering
	\setlength{\abovecaptionskip}{0.cm}
	\includegraphics[width=3in]{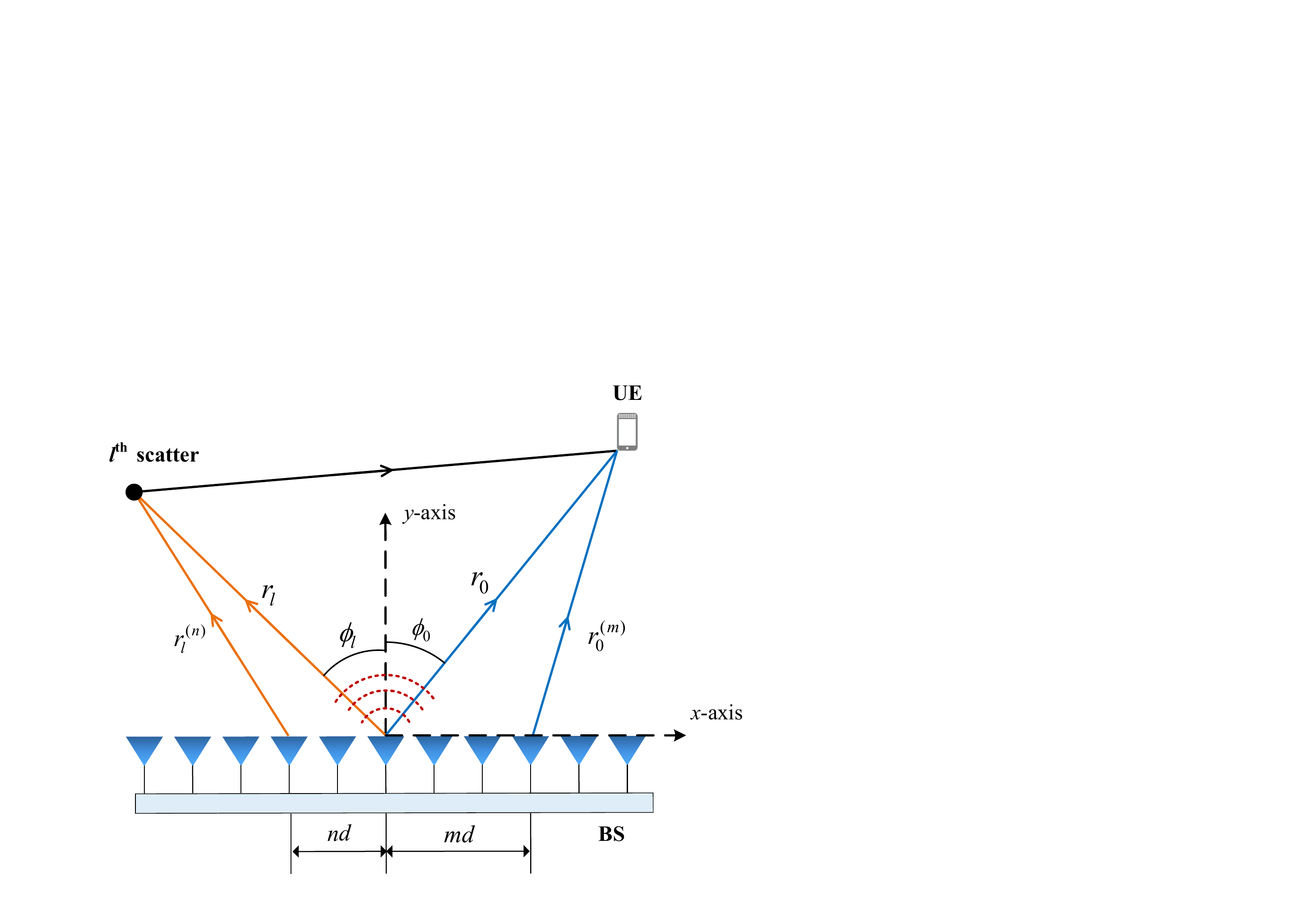}
	\caption{Near-field channel model for ULA communication systems.}
	\label{img: channel ULA}
    \vspace{-3mm}
\end{figure}

\begin{remark}
When the Taylor series expansion only keeps the first-order term as $\sqrt{1+x} \approx 1 + \frac{x}{2}$, the distance $r_{l}^{(n)}$ becomes proportional to index $n$, which makes near-field beam focusing vectors in~\eqref{eq: near field response} degenerate to far-field beam steering vectors in~\eqref{eq: far-field channel}. In other words, the linear phase of far-field beam steering vectors is a special case of non-linear phase of near-field beam focusing vector without higher-order information.
\end{remark}

\par For UPA with $N = N_1 \times N_2$ elements where antennas are deployed in two perpendicular directions, the beam focusing vectors are markedly different from those of ULA. Specifically, the near-field beam focusing vector for UPA can be obtained based on the spherical-wave propagation model as
\begin{equation}
\label{eq: near-field UPA channel}
\begin{aligned}
{\bf{b}}_{\rm{P}}(r, \theta, \phi) = \frac{1}{\sqrt{N}}\left[e^{-jk(r^{(-\widetilde{{\bf{\mu}}})}-r)},\cdots,e^{-jk(r^{(\widetilde{{\bf{\mu}}})}-r)}\right]^T,
\end{aligned}
\end{equation}
where $\widetilde{{\bf{\mu}}} = (\widetilde{N}_1, \widetilde{N}_2)$, $\widetilde{N}_1$ and $\widetilde{N}_2$ are similarly defined as $\widetilde{N}_i = \frac{N_i-1}{2}$. As shown in Fig.~\ref{img: near channel UPA}, the distance between the scatterer (or UE) and the $(n_1,n_2)$-element of the $l^{th}$ path is expressed as in (\ref{eq: near-field UPA distance term}) at the bottom, where $r_l$ denotes the distance between the center of UPA and the scatterer (or UE) of the $l^{\rm th}$ path and $\psi_{{\rm{P}}, r_l, \theta_l, \phi_l}^{(n_1, n_2)}$ denotes the difference of the propagation distance of the $l^{\rm th}$ path compared with the center of UPA. Again, the approximation (a) is derived by adopting a similar approximation as in~\eqref{eq: near field distance term}.
\begin{figure*}[hb]
\hrule
\begin{equation}
\label{eq: near-field UPA distance term}
\begin{aligned}
r_l^{(n_1,n_2)} &= \sqrt{(r_l\sin\theta_l\cos\phi_l)^2 + (r_l\cos\theta_l-n_1d)^2+(r_l\sin\theta_l \sin\phi_l-n_2d)^2} \\
& \mathop {\approx}\limits^{(a)} r_l - n_1d\cos\theta_l - n_2d\sin\theta_l\sin\phi_l + \frac{n_1^2d^2}{2r_l}(1-\cos^2\theta_l) + \frac{n_2^2d^2}{2r_l}(1-\sin^2\theta_l\sin^2\phi_l) - \frac{n_1n_2d^2\cos\theta_l\sin\theta_l\sin\phi_l}{r_l} \\
& = r_l + \psi_{{\rm{P}}, r_l, \theta_l, \phi_l}^{(n_1, n_2)}.
\end{aligned}
\end{equation}
\end{figure*}

\begin{remark}
It is worth noting that, the far-field steering vectors along the $y$- and $z$-axis can be decoupled. Thus, the far-field steering vector of UPA can be expressed as the Kronecker product of two steering vectors of ULA as in~\eqref{eq: upa}. While in the near-field region where the spherical-wave assumption is adopted, the phase term is coupled along the $y$- and $z$-axis in~\eqref{eq: near-field UPA distance term}, which brings fundamental distinction in the beam codebook design method, which is discussed in Section~\ref{sec: codebook}.
\end{remark}

\begin{figure}[!t]
	\centering
	\setlength{\abovecaptionskip}{0.cm}
	\includegraphics[width=3in]{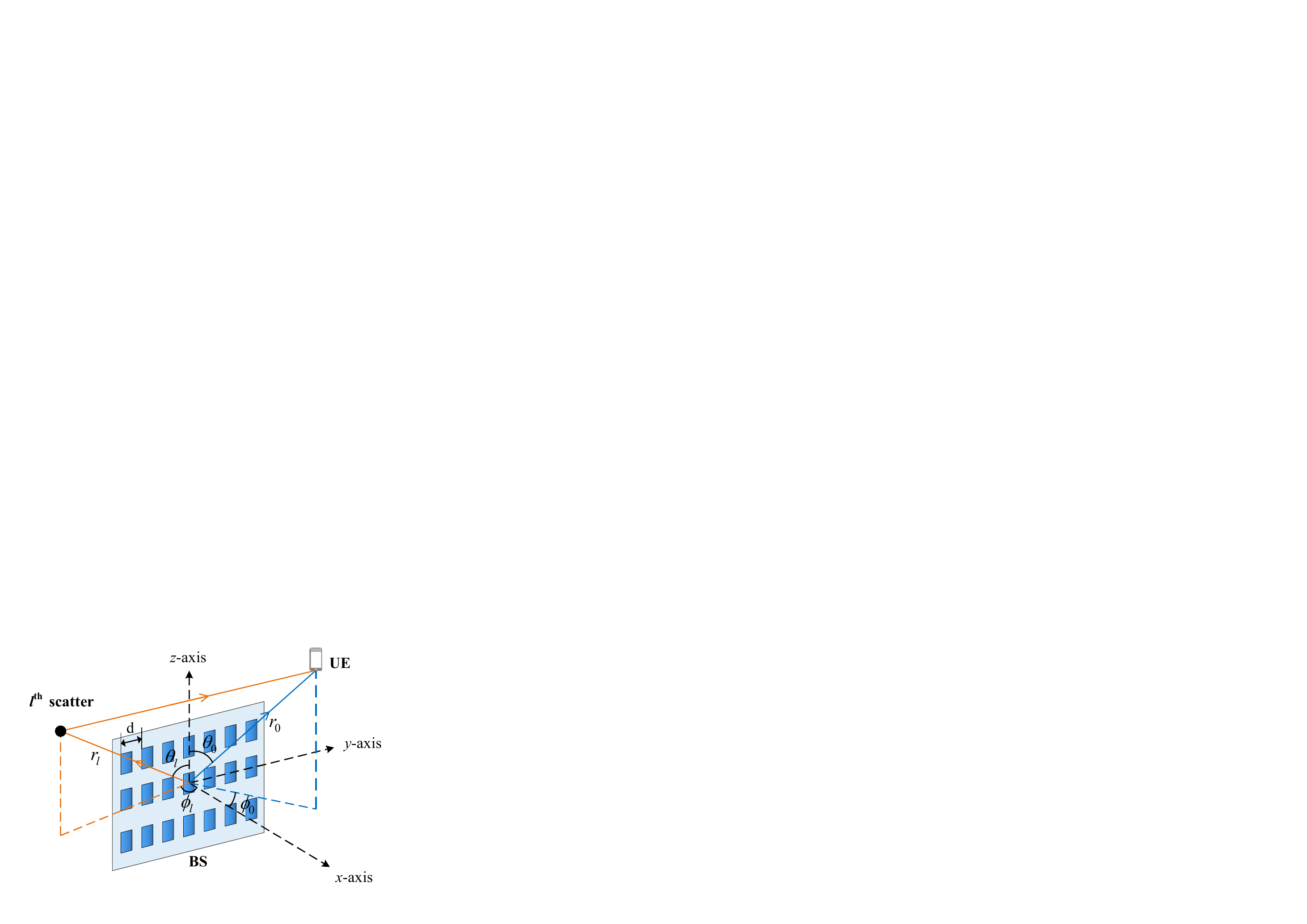}
	\caption{Near-field channel model for UPA communication systems.}
	\label{img: near channel UPA}
\end{figure}

To sum up, different from far-field channels based on planar-wave assumptions in~\eqref{eq: ula} and~\eqref{eq: upa}, the phase terms based on~\eqref{eq: near field distance term} and~\eqref{eq: near-field UPA distance term} are related to both spatial angle and distance. Therefore, the near-field channels for UEs located in the same angle but different distances are remarkably different. The UEs can be distinguished according to their spatial angle and distance, i.e., their location on the 2D (3D) space for the ULA (UPA) case, which is a fundamental change compared with far-field communications. In the following section, the spatial resolution of near-field beam focusing vectors in angular and distance domains will be discussed.

\section{Analysis of Asymptotic Orthogonality of Near-Field Beam Focusing Vectors}\label{sec: analysis}
To further analyze the property of near-field beam focusing vectors, the asymptotic orthogonality of near-field beam focusing vectors is investigated in this section. The orthogonality is revealed by investigating the correlation between different beam focusing vectors. We first review the asymptotic orthogonality of far-field beam steering vectors in the angular domain. Then, we show that the asymptotic orthogonality generalizes from the angular domain to both angular and distance domains in the near-field region for ULA systems. Finally, the proposition is generalized to UPA systems. The asymptotic orthogonality reveals the fundamental differences between far-field and near-field communications.

\subsection{Analysis of ULA Systems}\label{sec: analysis ULA}
We first consider the far-field communications for ULA systems. The correlation of steering vectors focusing on $\phi_l$ and $\phi_m$ can be formulated as
\begin{equation}
\label{eq: far gain}
\begin{aligned}
\left|{\bf{a}}_{\rm{L}}^H(\phi_l) {\bf{a}}_{\rm{L}}(\phi_m)\right| & = \frac{1}{N} \left|\sum_{n=-\widetilde{N}}^{\widetilde{N}} e^{jnkd(\sin\phi_m-\sin\phi_l)}\right| \\
& = \left| \Xi_N(kd(\sin\phi_m-\sin\phi_l)) \right|,
\end{aligned}
\end{equation}
where $\Xi_N(\alpha) = \sin\frac{N}{2}\alpha/(N\sin\frac{1}{2}\alpha)$ is the Dirichlet sinc function. According to~\eqref{eq: far gain}, the correlation of steering vectors achieves the maximum when $\phi_l = \phi_m$. If we consider two single-path UEs located at the same angle, the channels of the two UEs are identical and simultaneous transmissions can not be established through precoding. Otherwise, the correlation of steering vectors focusing on different angles tends to be orthogonal with infinite antennas as~\cite{Nehorai'95'j}
\begin{equation}
\label{eq: far infinity}
\begin{aligned}
\lim_{N \to +\infty} |{\bf{a}}^H(\phi_l) {\bf{a}}(\phi_m)| = 0 \ , \  \phi_l \neq \phi_m.
\end{aligned}
\end{equation}

Therefore, the spatial angular resolution of BS tends to infinity as the number of antennas increases. Owing to the angular orthogonality, BS is able to distinguish different channel components and multiplex different data streams to different UEs. Therefore, the angular orthogonality of the far-field beam steering vectors contributes to the SDMA or beam division multiple access (BDMA) schemes \cite{Xiao'15'j}. In classical massive MIMO communications, SDMA or BDMA schemes have been thoroughly investigated. Nevertheless, the far-field beam steering vector is no longer valid in the near-field region, where the propagation model changes from planar-wave models into spherical-wave models.

\par Based on the near-field channel focusing vector in~\eqref{eq: near field response}, the correlation of two beam focusing vectors corresponding to the location of $(r_l, \phi_l)$ and $(r_p, \phi_p)$ is written as
\begin{equation}
\label{eq: near gain}
\begin{aligned}
\left|{\bf{b}}_{\rm{L}}^H(r_l, \phi_l) {\bf{b}}_{\rm{L}}(r_m, \phi_m)\right| = \frac{1}{N} \left| \sum_{n=-\widetilde{N}}^{\widetilde{N}} e^{jk (\psi_{{\rm{L}}, r_l, \phi_l}^{(n)} - \psi_{{\rm{L}}, r_m, \phi_m}^{(n)})} \right|,  
\end{aligned}
\end{equation}
where $\psi_{{\rm{L}}, r_l, \phi_l}^{(n)}$ and $\psi_{{\rm{L}}, r_m, \phi_m}^{(n)}$ are defined as in~\eqref{eq: near field distance term}.
\par Unlike the far-field beam steering vectors which are only dependent on spatial angles, the near-field beam focusing vector is related to both spatial angle and distance. According to {\bf{Lemma} 1} in~\cite{Cui'22'tcom}, the correlation of two near-field beam focusing vectors corresponding to the same angle but different distances can be characterized with the following lemma. 
\begin{lemma}
\label{lemma1}
The correlation of near-field beam focusing vectors can be approximated as follows
\begin{equation}
\label{eq: near field correlation}
\begin{aligned}
\left|{\bf{b}}^H(r_l, \phi) {\bf{b}}(r_m, \phi) \right| \approx \left| \frac{C(\beta)+jS(\beta)}{\beta} \right|, 
\end{aligned}
\end{equation}
where $\beta = \sqrt{\frac{d^2\cos^2\phi}{2\lambda}|\frac{1}{r_l}-\frac{1}{r_m}|}N$, $C(\cdot)$ and $S(\cdot)$ denote the Fresnel functions written as $C(x) = \int_0^x \cos(\frac{\pi}{2}t^2){\rm{d}}t$ and $S(x) = \int_0^x \sin(\frac{\pi}{2}t^2){\rm{d}}t$ \cite{Sherman'62'j}.
\end{lemma}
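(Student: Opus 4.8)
The plan is to begin from the exact correlation formula~\eqref{eq: near gain}, specialize it to the coincident-angle case $\phi_l=\phi_m=\phi$, and then pass from the resulting finite quadratic-phase sum to a Fresnel integral in the large-array limit.

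First I would insert the Fresnel-approximated phases $\psi_{{\rm{L}}, r, \phi}^{(n)}=-nd\sin\phi+\frac{n^2d^2}{2r}\cos^2\phi$ from~\eqref{eq: near field distance term} into~\eqref{eq: near gain}. Since the two beams share the angle $\phi$, the linear-in-$n$ terms $-nd\sin\phi$ cancel identically and only the curvature (quadratic) terms survive, giving
\begin{equation*}
\left|{\bf{b}}^H(r_l,\phi){\bf{b}}(r_m,\phi)\right|\approx\frac{1}{N}\left|\sum_{n=-\widetilde{N}}^{\widetilde{N}}e^{\,j b n^2}\right|,\qquad b:=\frac{\pi d^2\cos^2\phi}{\lambda}\left(\frac{1}{r_l}-\frac{1}{r_m}\right).
\end{equation*}
This already exhibits the essential mechanism: a pure distance mismatch leaves a Gauss-sum-like object whose continuum analogue is a Fresnel integral, whereas an angular mismatch would instead leave the linear phase of~\eqref{eq: far gain}.

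Next I would treat the sum as a Riemann sum. Introduce the rescaled nodes $t_n:=\tfrac{2n}{N}\beta$ with $\beta:=N\sqrt{\tfrac{d^2\cos^2\phi}{2\lambda}\left|\tfrac{1}{r_l}-\tfrac{1}{r_m}\right|}$; a direct check gives $bn^2=\pm\tfrac{\pi}{2}t_n^2$ and uniform mesh $\Delta t=2\beta/N$, the sign being ${\rm sgn}(\tfrac1{r_l}-\tfrac1{r_m})$. Therefore
\begin{equation*}
\frac{1}{N}\sum_{n=-\widetilde{N}}^{\widetilde{N}}e^{\,jbn^2}=\frac{1}{2\beta}\sum_{n=-\widetilde{N}}^{\widetilde{N}}e^{\pm j\frac{\pi}{2}t_n^2}\,\Delta t\;\xrightarrow[N\to\infty]{}\;\frac{1}{2\beta}\int_{-\beta}^{\beta}e^{\pm j\frac{\pi}{2}t^2}\,{\rm d}t=\frac{C(\beta)\pm jS(\beta)}{\beta},
\end{equation*}
where the last equality uses the evenness of $\cos(\tfrac{\pi}{2}t^2)$ and $\sin(\tfrac{\pi}{2}t^2)$ together with the definitions of $C(\cdot)$ and $S(\cdot)$. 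Taking moduli absorbs the $\pm$ and yields~\eqref{eq: near field correlation}; the coincident case $r_l=r_m$ is recovered as the limit $\lim_{\beta\to0}\tfrac{C(\beta)+jS(\beta)}{\beta}=1$.

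The step I expect to be the crux is quantifying the sum-to-integral error. The naive worry is that the phase gap between adjacent terms near the array edge, $|b(2n+1)|\approx|b|N=2\pi\beta^2/N$, is not small termwise; but since we are averaging $N$ such terms, the relevant quantity is the Riemann-sum error of the \emph{fixed} smooth function $t\mapsto e^{\pm j\pi t^2/2}$ on the \emph{fixed} interval $[-\beta,\beta]$ with vanishing mesh $\Delta t=2\beta/N$, which is $O(\beta^2/N)\to0$ (an explicit bound follows from the mean value theorem applied on each unit interval in $n$ and then summing, the $1/N$ normalization producing the decay). I would close by remarking that the preceding second-order (Fresnel) expansion in~\eqref{eq: near field distance term}, valid once the range exceeds $r_{\rm{FR}}$, is precisely what makes the phase quadratic in $n$ and hence legitimizes the Fresnel-integral closed form; retaining only the first-order term would collapse the object back to the Dirichlet-sinc of~\eqref{eq: far gain}, consistent with the \textbf{Remark} following~\eqref{eq: near field distance term}.
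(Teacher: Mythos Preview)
Your argument is correct and matches the paper's methodology. The paper does not actually prove Lemma~\ref{lemma1} itself---it is quoted from~\cite{Cui'22'tcom}---but the technique you use (cancel the linear phase at equal angles, then replace the resulting quadratic-phase sum by its integral and identify the Fresnel functions) is precisely the approach the paper employs in Appendices~\ref{app: coro2}, \ref{app: lemma2}, and~\ref{app: proof of lemma4} for the closely related Corollary~\ref{coro2}, Lemma~\ref{lemma2}, and Lemma~\ref{lemma4}; your added remarks on the Riemann-sum error and the $\beta\to0$ limit go slightly beyond what the paper states but are consistent with it.
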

On a basis of {\bf{Lemma}~\ref{lemma1}}, the asymptotic orthogonality of ULA systems in the distance domain could be directly derived as follows.
\begin{corollary}[ULA Asymptotic Orthogonality in Distance Domain]
\label{coro1}
Near-field beam focusing vectors corresponding to the same angle and different distances are asymptotically orthogonal with the increasing number of antennas for ULA systems, which is to say
\begin{equation}
\label{eq: near lim}
\begin{aligned}
\lim_{N \to +\infty} \left|{\bf{b}}^H(r_l, \phi) {\bf{b}}(r_m, \phi) \right|  = 0,\ {\rm{for}}\ r_l \neq r_m.
\end{aligned}
\end{equation}
\end{corollary}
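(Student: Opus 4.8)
\emph{Proof strategy.} My plan is to obtain Corollary~\ref{coro1} directly from Lemma~\ref{lemma1} by letting $N\to+\infty$ in~\eqref{eq: near field correlation}, after first recording the structural reason the correlation takes a Fresnel form. Starting from~\eqref{eq: near gain}, I would note that when the two beam focusing vectors share the same angle $\phi$, the linear-in-$n$ contributions $-nd\sin\phi$ appearing in $\psi_{{\rm{L}}, r_l, \phi}^{(n)}$ and $\psi_{{\rm{L}}, r_m, \phi}^{(n)}$ (cf.~\eqref{eq: near field distance term}) cancel, so the exponent collapses to the purely quadratic phase $k\,\tfrac{n^2 d^2 \cos^2\phi}{2}\bigl(\tfrac{1}{r_l}-\tfrac{1}{r_m}\bigr)$; approximating the resulting quadratic exponential (Gauss-type) sum by an integral and rescaling is exactly what produces the Fresnel functions $C(\beta)$ and $S(\beta)$ with $\beta$ linear in $N$, which is the content of Lemma~\ref{lemma1}.

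Granting Lemma~\ref{lemma1}, the remaining argument is essentially a one-line limit. I would fix an angle $\phi$ with $\cos\phi\neq 0$ and two distinct distances $r_l\neq r_m$, so that
\begin{equation*}
c \;:=\; \sqrt{\frac{d^2\cos^2\phi}{2\lambda}\left|\frac{1}{r_l}-\frac{1}{r_m}\right|}
\end{equation*}
is a strictly positive constant that does not depend on $N$; hence $\beta = cN\to+\infty$. I would then invoke the classical large-argument values of the Fresnel integrals, $\lim_{x\to+\infty}C(x)=\lim_{x\to+\infty}S(x)=\tfrac12$, together with their boundedness on $[0,+\infty)$, to conclude that the numerator $|C(\beta)+jS(\beta)| = \sqrt{C(\beta)^2+S(\beta)^2}$ stays bounded (and in fact tends to $\tfrac{1}{\sqrt2}$). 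Dividing a bounded quantity by $\beta\to+\infty$ yields $\bigl|(C(\beta)+jS(\beta))/\beta\bigr|\to 0$, which by Lemma~\ref{lemma1} is precisely~\eqref{eq: near lim}, together with an $\mathcal{O}(1/N)$ decay rate.

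I do not anticipate a real obstacle, since the analytically delicate part is already packaged in Lemma~\ref{lemma1}; what remains is elementary. The two points I would handle carefully are: (i) excluding the endfire directions $\phi = \pm\tfrac{\pi}{2}$, where $\cos\phi = 0$ forces $\beta\equiv 0$, the quadratic phase vanishes, and a ULA simply cannot resolve range along its own axis --- the near-field counterpart of the broadside coincidence excluded from the far-field statement~\eqref{eq: far infinity}; and (ii) the fact that~\eqref{eq: near field correlation} rests on the second-order (Fresnel) expansion in~\eqref{eq: near field distance term}, so the statement is properly the limit of the Fresnel-approximated correlation --- but since the higher-order terms dropped there are uniformly negligible beyond the Fresnel boundary $r_{\rm{FR}}$, this does not affect the conclusion. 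Should a fully self-contained argument be preferred, I would instead bound the quadratic exponential sum $\tfrac1N\bigl|\sum_{n=-\widetilde{N}}^{\widetilde{N}} e^{j a n^2}\bigr|$ (with the constant $a$ collecting $k$, $d$, $\phi$, $r_l$, $r_m$) directly by comparison with $\int e^{j a t^2}\,{\rm d}t$ and the Fresnel integral $\int_{-\infty}^{\infty} e^{j u^2}\,{\rm d}u = \sqrt{\pi}\,e^{j\pi/4}$, which again exhibits the $\mathcal{O}(1/N)$ decay.
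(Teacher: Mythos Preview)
Your proposal is correct and follows essentially the same route as the paper: invoke Lemma~\ref{lemma1}, note that for $r_l\neq r_m$ the parameter $\beta$ grows linearly in $N$, and use $C(\beta)+jS(\beta)\to\tfrac12+\tfrac{j}{2}$ while $\beta\to+\infty$ to conclude the ratio in~\eqref{eq: near field correlation} vanishes. Your additional remarks on the endfire exclusion $\cos\phi\neq 0$ and the Fresnel-approximation caveat are more explicit than the paper's proof but do not change the argument.
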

\begin{proof}
According to~\eqref{eq: near field correlation}, when the number of antennas $N$ tends to infinity, the numerator $C(\beta) + jS(\beta)$ converges to $0.5 + 0.5j$ and the denominator $\beta$ tends to $+\infty$~\cite{boersma1960computation}. Therefore, the correlation converges to 0, which proves the asymptotic orthogonality in the distance domain.
\end{proof}
To verify the asymptotic orthogonality in distance domain with a numerical example, the correlation of beam focusing vectors corresponding to $(5\,{\rm{m}}, \pi/6)$ and $(15\,{\rm{m}}, \pi/6)$ is plotted in Fig.~\ref{img:Cor_ula}. The frequency is set to $30$ GHz and the antennas are half-wavelength spaced. It is shown that the correlation significantly decreases as the number of antennas increases. Moreover,~\eqref{eq: near field correlation} can well approximate the accurate correlation\footnote{It is worth noting that the blue solid line in Fig.~\ref{img:Cor_ula} is obtained with the approximation of $r_l$ in~\eqref{eq: near field distance term}. One may notice that when the number of antennas tends to infinity, the near-field assumption does not hold anymore and the approximation in~\eqref{eq: near field distance term} is no longer valid since higher-order terms of Taylor series occur. However, we mainly focus on the asymptotic trends assuming that near-field assumptions are valid, which is similar to the asymptotic orthogonality in the angular domain in the far-field region \cite{Xiao'15'j}.}.

\begin{figure}[!t]
	\centering
	\setlength{\abovecaptionskip}{0.cm}
	\includegraphics[width=3in]{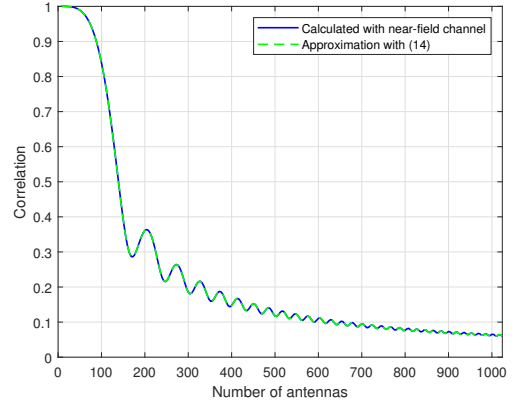}
	\caption{Correlation with increasing antennas for ULA systems.}
	\label{img:Cor_ula}
\end{figure}

Furthermore, we prove a more general 2D orthogonality in both angular and distance domains in the near-field region for ULA systems in the following corollary.
\begin{corollary}[ULA Asymptotic Orthogonality in 2D Domain]
\label{coro2}
Near-field beam focusing vectors corresponding to any different angles or different distances are also asymptotically orthogonal with the increasing number of antennas for ULA systems, which is to say
\begin{equation}
\label{eq: near lim 2D}
\begin{aligned}
\lim_{N \to +\infty} \left|{\bf{b}}_{\rm{L}}^H(r_l, \phi_l) {\bf{b}}_{\rm{L}}(r_m, \phi_m)\right| = 0,\,{\rm{for}}\,r_l \neq r_m\ {\rm{or}}\ \phi_l \neq \phi_m.
\end{aligned}
\end{equation}
\end{corollary}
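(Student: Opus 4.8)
The plan is to reduce the two-dimensional statement to the one-dimensional results already at hand: \textbf{Lemma}~\ref{lemma1} together with \textbf{Corollary}~\ref{coro1} in the distance domain, and the classical far-field limit~\eqref{eq: far infinity} in the angular domain. Starting from~\eqref{eq: near gain} and inserting the second-order distance expansion $\psi_{{\rm{L}},r,\phi}^{(n)}=-nd\sin\phi+\tfrac{n^2d^2}{2r}\cos^2\phi$ from~\eqref{eq: near field distance term}, the normalized correlation can be written as $\tfrac{1}{N}\bigl|\sum_{n=-\widetilde N}^{\widetilde N}e^{j(an+bn^2)}\bigr|$, where $a=-kd(\sin\phi_l-\sin\phi_m)$ is the \emph{angular} coefficient and $b=\tfrac{kd^2}{2}\bigl(\tfrac{\cos^2\phi_l}{r_l}-\tfrac{\cos^2\phi_m}{r_m}\bigr)$ is an \emph{effective curvature} coefficient. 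The argument then splits according to whether $b$ vanishes.

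If $b\neq 0$, I would complete the square, $an+bn^2=b\bigl(n+\tfrac{a}{2b}\bigr)^2-\tfrac{a^2}{4b}$, so that the constant phase $-\tfrac{a^2}{4b}$ drops out of the modulus and the sum reduces to a shifted quadratic (Gauss-type) sum. Approximating it by the corresponding Fresnel integral exactly as in the proof of \textbf{Lemma}~\ref{lemma1} --- the only new feature being the harmless real shift $\tfrac{a}{2b}$ in the integration variable --- shows that the sum stays bounded by a constant depending only on the (fixed) parameters, while the prefactor $\tfrac{1}{N}\to 0$; this recovers the $\mathcal{O}(1/N)$ decay of \textbf{Corollary}~\ref{coro1} now with a nonzero linear term present. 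Note that whenever $\phi_l=\phi_m$ with a non-endfire angle ($\cos\phi\neq 0$), the hypothesis $r_l\neq r_m$ already forces $b\neq 0$, so this case covers every same-angle pair.

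The remaining case is $b=0$. Under the hypothesis, $b=0$ forces $\phi_l\neq\phi_m$, since for $\cos\phi\neq 0$ equal angles would make $b=0$ equivalent to $r_l=r_m$. Then the quadratic term disappears and the correlation collapses to the pure far-field Dirichlet-sinc form $\bigl|\Xi_N\bigl(kd(\sin\phi_l-\sin\phi_m)\bigr)\bigr|$ of~\eqref{eq: far gain}; since $\phi\mapsto\sin\phi$ is injective on $(-\pi/2,\pi/2)$, the argument is nonzero and~\eqref{eq: far infinity} yields the limit $0$. Combining the two cases establishes~\eqref{eq: near lim 2D}.

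I expect the only genuinely delicate point to be the uniformity of the sum-to-integral (Fresnel) approximation in the case $b\neq 0$ when the linear shift $a/(2b)$ is large, together with the implicit exclusion of the exact endfire direction $\phi_l=\phi_m=\pm\pi/2$, where $b\equiv 0$ for all distances and the beam focusing vectors genuinely fail to be distinguishable --- a degeneracy already present in the far field. Everything else is either a direct appeal to \textbf{Lemma}~\ref{lemma1} / \eqref{eq: far infinity} or elementary algebra.
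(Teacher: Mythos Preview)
Your proposal is correct and follows essentially the same route as the paper: your coefficients $a,b$ are exactly the paper's $\eta_1,\eta_2$, the case split on $b=\eta_2$ versus $b\neq 0$ is identical, and in each case the paper uses the same completing-the-square substitution $t=n+\eta_1/(2\eta_2)$ followed by the Fresnel integral (for $\eta_2\neq 0$) or the Dirichlet sinc limit~\eqref{eq: far infinity} (for $\eta_2=0$). Your remarks on the endfire degeneracy and on the uniformity of the sum-to-integral approximation when the shift $a/(2b)$ is large are valid caveats that the paper simply elides.
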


\begin{proof}
The proof is provided in {\bf Appendix~\ref{app: coro2}}.
\end{proof}
\par The corollary reveals that the angular orthogonality in the far-field region generalizes into the 2D orthogonality in the near-field region. The orthogonality indicates a stronger potential to establish simultaneous transmissions through precoding in ELAA communications. Therefore, it brings possibilities for a novel multiple access scheme, which will be discussed in Section~\ref{sec: LDMA}.
\par So far, only the ULA systems are investigated. The asymptotic orthogonality of beam focusing vectors for UPA systems is discussed in the following subsection.

\subsection{Analysis of UPA Systems}\label{sec: analysis UPA}
Different from near-field beam focusing vectors of ULA, the last term in~\eqref{eq: near-field UPA distance term} indicates a coupling of the indices in two directions. Based on the near-field beam focusing vector of UPA systems defined in~\eqref{eq: near-field UPA channel}, the asymptotic orthogonality could be similarly obtained through the following lemma.
\begin{lemma}[UPA Distance Asymptotic Orthogonality]
\label{lemma2}
Near-field beam focusing vectors focusing on the same elevation and azimuth angle but different distances are asymptotically orthogonal with the increasing number of antennas in UPA systems. Without loss of generality, when $N_1 \to \infty$ we have
\begin{equation}
\label{eq: near lim 2D UPA}
\begin{aligned}
\lim_{N_1 \to +\infty} \left|{\bf{b}}_{\rm{P}}^H(r_l, \phi, \theta) {\bf{b}}_{\rm{P}}(r_m, \phi, \theta)\right| = 0,\ {\rm{for}}\ r_l \neq r_m.
\end{aligned}
\end{equation}
\end{lemma}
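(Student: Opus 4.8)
\emph{Proof proposal.} The plan is to reduce the UPA double sum to a one-dimensional Fresnel-type sum of the kind already handled in \textbf{Lemma~\ref{lemma1}}, and then observe that the remaining outer summation contributes precisely the factor $N_2$ that cancels against the normalization $1/N = 1/(N_1 N_2)$. Starting from~\eqref{eq: near-field UPA channel}, the correlation equals $\frac{1}{N_1 N_2}\bigl|\sum_{n_1,n_2} e^{jk(\psi_{{\rm P},r_l,\theta,\phi}^{(n_1,n_2)} - \psi_{{\rm P},r_m,\theta,\phi}^{(n_1,n_2)})}\bigr|$, and since both focusing vectors share the angles $(\theta,\phi)$, the parts of $\psi_{{\rm P}}^{(n_1,n_2)}$ linear in $n_1$ and $n_2$ are identical and cancel in the difference. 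Setting $\delta = \tfrac{1}{r_l} - \tfrac{1}{r_m} \neq 0$ and $a = \sin^2\theta$, $b = 1 - \sin^2\theta\sin^2\phi$, $c = \cos\theta\sin\theta\sin\phi$, the surviving phase is the quadratic form
\[
k\bigl(\psi_{{\rm P},r_l,\theta,\phi}^{(n_1,n_2)} - \psi_{{\rm P},r_m,\theta,\phi}^{(n_1,n_2)}\bigr) = \tfrac{k\delta d^2}{2}\bigl(a\, n_1^2 - 2c\, n_1 n_2 + b\, n_2^2\bigr).
\]

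I would then complete the square in $n_1$, writing $a n_1^2 - 2c n_1 n_2 + b n_2^2 = a\bigl(n_1 - \tfrac{c}{a} n_2\bigr)^2 + \tfrac{ab - c^2}{a} n_2^2$ with $ab - c^2 = \sin^2\theta\cos^2\phi \ge 0$; this step needs $\theta \notin \{0,\pi\}$, i.e. $a > 0$, which is a genuine non-degeneracy requirement (when $\theta \in \{0,\pi\}$ the $z$-axis points straight at the user, so the $n_1$-direction carries no distance curvature). Because the $n_2^2$ term is real it is a unit-modulus phase that factors out of the inner sum, and applying the triangle inequality to the outer sum gives
\[
\bigl|{\bf b}_{\rm P}^H(r_l,\theta,\phi) {\bf b}_{\rm P}(r_m,\theta,\phi)\bigr| \le \frac{1}{N_1 N_2}\sum_{n_2 = -\widetilde{N}_2}^{\widetilde{N}_2} \Bigl|\sum_{n_1 = -\widetilde{N}_1}^{\widetilde{N}_1} e^{\, j\frac{k\delta d^2 a}{2}(n_1 - \frac{c}{a} n_2)^2}\Bigr|.
\]

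The crux is to bound the inner $n_1$-sum by a constant $M_0$ that is independent of both $n_2$ and $N_1$. Following the Fresnel-approximation step underlying \textbf{Lemma~\ref{lemma1}} (cf.~\cite{Cui'22'tcom}), I would replace the inner sum by $\int_{-\widetilde{N}_1}^{\widetilde{N}_1} e^{j\frac{k\delta d^2 a}{2}(t - \frac{c}{a} n_2)^2}\,{\rm d}t$ and substitute to normalize the quadratic phase to $\pm\tfrac{\pi}{2} s^2$, which expresses the integral through the Fresnel functions $C(\cdot)$ and $S(\cdot)$ evaluated between its two (shifted) endpoints. Since $C$ and $S$ are bounded on all of $\mathbb{R}$, the result has modulus at most some $M_0$ depending on $\lambda$, $d$, $\theta$, $\delta$ but neither on the shift $\tfrac{c}{a} n_2$ nor on $\widetilde{N}_1$. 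Inserting this uniform bound yields $\bigl|{\bf b}_{\rm P}^H(r_l,\theta,\phi) {\bf b}_{\rm P}(r_m,\theta,\phi)\bigr| \le \tfrac{1}{N_1 N_2}\cdot N_2 M_0 = \tfrac{M_0}{N_1} \to 0$ as $N_1 \to \infty$, which is the claim.

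The main obstacle I anticipate is making the ``inner sum $\approx$ Fresnel integral, uniformly in $n_2$'' step precise: the quadratic phase carries the non-integer offset $\tfrac{c}{a} n_2$ and the summation window is shifted relative to the stationary point, so one must verify that neither effect spoils the uniform bound — and this works precisely because the (generally incomplete) Fresnel integrals between the two shifted endpoints stay bounded for all endpoints, inheriting the same approximation accuracy already invoked for the ULA case in \textbf{Lemma~\ref{lemma1}}. A secondary point worth stating explicitly is the non-degeneracy hypothesis $\sin\theta \neq 0$ (and, symmetrically, $\sin^2\theta\sin^2\phi \neq 1$ for the $N_2 \to \infty$ version), without which the quadratic form degenerates in the relevant index and the asymptotic orthogonality can fail.
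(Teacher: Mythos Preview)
Your proposal is correct and follows essentially the same route as the paper: both complete the square in $n_1$ to isolate a shifted Fresnel-type sum, replace it by the corresponding Fresnel integral, use the global boundedness of $C$ and $S$ to obtain a uniform $O(1)$ bound on the inner piece, and then apply the triangle inequality to the outer $n_2$ sum/integral to recover the $1/N_1$ decay. Your explicit identification of the non-degeneracy hypothesis $\sin\theta\neq 0$ (equivalently $\zeta_1\neq 0$ in the paper's notation) is a point the paper leaves implicit but indeed needs.
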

\begin{proof}
A proof sketch is provided in {\bf Appendix~\ref{app: lemma2}}.
\end{proof}

Lemma 2 reveals a stronger conclusion compared with the ULA cases discussed in {\bf{Corollary~\ref{coro1}}}. It indicates that in UPA communications, as long as the number of antennas at one side ($N_1$ or $N_2$) is very large, the orthogonality of two different near-field beam focusing vectors can be guaranteed.
\par So far, we have revealed the asymptotic orthogonality for both ULA and UPA systems. The asymptotic orthogonality of beam steering (focusing) vectors generalizes from the angular domain in the far-field region to both angular and distance domains in the near-field region. The improved spatial orthogonality in the 2D or 3D space, i.e., the interference of the incoming electromagnetic waves radiated from different angles or different distances is limited, could contribute to interference suppression in multi-user ELAA communications. Therefore, a novel multiple access scheme based on the UE location is promising to increase the system performance in ELAA communication systems.

\section{Near-Field Location Division Multiple Access Scheme}\label{sec: LDMA}
In this section, the LDMA scheme for ELAA communication systems is introduced. Then the near-field beam codebook for the initial access procedure is proposed, which is critical to maximize the received signal power and manage the inter-user interference in the LDMA scheme. 

\subsection{LDMA Framework}\label{sec: ldma framework}
As illustrated in Section~\ref{sec: analysis}, the orthogonality of near-field channels in both angular and distance domains reveals an increased precision of beamforming. The near-field beamforming is capable of focusing the energy on a specific location rather than a specific angle. The fundamental change of beamforming property indicates that, the far-field steering beams can be replaced by near-field focusing beams to improve the received signal and suppress the interference from other UEs. 
\par To achieve this goal, the concept of LDMA is proposed in practical hybrid precoding communications. The core idea is to employ near-field location-dependent beam focusing vectors as analog precoders to serve different UEs located in different locations. Compared with classical far-field SDMA schemes~\cite{Heath'15'j}, the proposed LDMA scheme is able to utilize the focusing of near-field beams rather than the steering property of far-field beams~\cite{cui'22'm}, providing a different method to efficiently take advantage of spatial resources.

\subsection{Codebook Design}\label{sec: codebook}
In the proposed LDMA scheme, precoding is essential to suppress the inter-user interference to enable parallel transmissions for different UEs. Since the digital precoder is relatively easy to design after acquiring the effective channel, a critical problem is the design of analog precoder. To reduce the overhead of initial access, a finite-size codebook is often preferred to design the analog precoding.

The finite-size codebook design methods can be mainly categorized into two classes, i.e. quantized beamforming codebooks and beamsteering codebooks (although precisely the focusing property is employed in near-field, we still call it beamsteering codebook for consistency). Generally speaking, quantized beamforming codebooks are designed to minimize the quantized maximum correlation of codes under specific distribution assumptions, such as Grassmannian codebooks~\cite{Strohmer'03'j}, random vector quantization (RVQ) codebooks~\cite{Love'07'j}, and generalized Lloyd codebooks~\cite{Giannakis'06'j}. Nevertheless, the complexity of designing quantized beamforming codebooks significantly increases with the increasing number of antennas. Also, the distribution for designing such codebooks is often hard to obtain, since the analysis of distribution of near-field beam focusing vectors over the high-dimension hypersphere is extremely complicated. Another category, i.e. beamsteering codebook, is preferred to address the problem~\cite{Heath'15'j}, where the beam steering vectors are employed to construct the codebook. Since the beam steering vectors can be viewed as the matched filters of the single-path channel, the beamsteering codebook is quite suitable for designing the analog precoder which aims to enlarge the received signal during the transmission.

In massive MIMO systems, the DFT codebooks are often adopted to perform analog precoding. The DFT codebook is constructed with a series of selected steering vectors, corresponding to angles that are uniformly sampled on the sine function value over a specific angular range. However, since the electromagnetic field changes from planar waves into spherical waves, the far-field codebooks are no longer valid in the near-field region. The mismatch of far-field codebook and near-field channel seriously degrades beamforming gain~\cite{Heath'22'j}.

To overcome the degradation, some near-field codebook design methods have been proposed. In \cite{han'20'j} a uniform sampling grid on the $x$- and $y$-axis is adopted for designing the dictionary channel matrix to perform compressive sensing. Similarly, a uniform sampling method was performed to construct the dictionary matrix for RIS-aided communications~\cite{Xiu'22'j}. On this basis, the correlation of near-field beams is proved to change non-uniformly along the distance dimension in~\cite{Cui'22'tcom}, where an efficient non-uniform sampling method in the polar domain is proposed for near-field channel estimation. However, the proposed codebook is based on ULA structures and can not be directly applied to UPA systems. To enable near-field communications with UPA, the near-field beamsteering codebook sampled in the 3D space for UPA systems is desired to be investigated.

\par First, if we consider a planar-wave propagation model, the correlation of far-field beam steering vectors corresponding to angles $(\theta_p, \phi_p)$ and $(\theta_q, \phi_q)$ can be expressed as
\begin{equation}
\label{eq: angular sampling}
\begin{aligned}
f_{l, m} = &\Bigg| \frac{1}{N} \sum_{n_1} \sum_{n_2} \exp \bigg( jkn_1d(\cos\theta_m-\cos\theta_l) \\
& + jkn_2d(\sin\theta_m\sin\phi_m - \sin\theta_l\sin\phi_l) \bigg) \Bigg| \\
 = & \left| \Xi_{N_1}(kd(\cos\theta_m-\cos\theta_l)) \right| \\
& \times \left|\Xi_{N_2}(kd(\sin\theta_m\sin\phi_m-\sin\theta_l\sin\phi_l)) \right|.
\end{aligned}
\end{equation}
Therefore, the angular samples can be chosen as
\begin{equation}
\label{eq: angular sampling on angle}
\begin{aligned}
\cos\theta_{n_1} = & \frac{2n_1-N_1+1}{N_1},\ n_1=0,1,\cdots,N_1-1 \\
\sin\theta_{n_2}\sin\phi_{n_2} = & \frac{2n_2-N_2+1}{N_2},\ n_2=0,1,\cdots,N_2-1,
\end{aligned}
\end{equation}
which ensures the orthogonality of beam steering vectors. Compared with the far-field DFT-based codebook which is only determined by the elevation and azimuth angle, an additional problem arises in the near-field region, which is how to perform sampling in the radius dimension. As illustrated in Section~\ref{sec: analysis UPA}, the difficulties to design codebooks for UPA systems lie in the three quadratic terms as in~\eqref{eq: near-field UPA distance term}. To resolve this problem, we first show that when $N = N_1 \times N_2$ is not extremely large, the bilinear quadratic term $\frac{n_1n_2d^2\cos\theta_l\sin\theta_l\sin\phi_l}{r_l}$ in~\eqref{eq: near-field UPA distance term} can be neglected. Thus, the distance term in~\eqref{eq: near-field UPA distance term} can be approximated as
\begin{equation}
\label{eq: cross term approximation}
\begin{aligned}
{\widetilde{r}}_{l}^{(n_1,n_2)} &\approx r_l - n_1d\cos\theta_l - n_2d\sin\theta_l\sin\phi_l \\
& + \frac{n_1^2d^2}{2r_l}(1-\cos^2\theta_l) + \frac{n_2^2d^2}{2r_l}(1-\sin^2\theta_l\sin^2\phi_l).
\end{aligned}
\end{equation}

With the approximated distance term, the approximated beamforming vector ${\widetilde{\bf{b}}} (r_l, \theta_l, \phi_l)$ can be constructed similarly to~\eqref{eq: near field response}. Next, we show that the performance loss of beamforming using approximated beamforming vector is negligible through the following {\bf Lemma~\ref{lemma3}}.
\begin{lemma}
\label{lemma3}
The beamforming gain using approximated beamforming vector without the bilinear quadratic term can be written as
\begin{equation}
\label{eq: approximate cross term}
\begin{aligned}
{\widetilde{g}} &= \left| {\widetilde{\bf{b}}}^H(r, \theta, \phi) {\bf{b}}(r, \theta, \phi) \right| \\
&= \frac{1}{N} \left | \sum_{n_1} \sum_{n_2} \exp\left(\frac{jkn_1n_2d^2\cos\theta\sin\theta\sin\phi}{r}\right) \right| \\
&\approx \left| \frac{1}{\eta}{\rm{Si}}(\eta) \right|,
\end{aligned}
\end{equation}
where ${\rm{Si}}(x)$ denotes the sine integral ${\rm{Si}}(x) = \int_0^x \frac{\sin t}{t}\mathrm{d}t$ and $\eta = \frac{N_1N_2kd^2\cos\theta\sin\theta\sin\phi}{4r}$.
\end{lemma}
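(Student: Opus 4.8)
The plan is to evaluate the inner product explicitly, collapse the resulting double sum to a single sum with the Dirichlet kernel, and then recognize a Riemann sum converging to the sine integral. First I would note that ${\widetilde{\bf{b}}}(r,\theta,\phi)$ and ${\bf{b}}(r,\theta,\phi)$ are built from the phase terms ${\widetilde{r}}_l^{(n_1,n_2)}$ of~\eqref{eq: cross term approximation} and $r_l^{(n_1,n_2)}$ of~\eqref{eq: near-field UPA distance term}, which agree except that the former drops the bilinear term $-n_1 n_2 d^2\cos\theta\sin\theta\sin\phi/r$. Consequently, in the product ${\widetilde{\bf{b}}}^H {\bf{b}}$ all linear and square contributions cancel element-by-element, leaving only the bilinear phase, which is exactly the second line of~\eqref{eq: approximate cross term}. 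Abbreviating $c = kd^2\cos\theta\sin\theta\sin\phi/r$, the remaining task is to estimate ${\widetilde{g}} = \frac{1}{N}\big| \sum_{n_1=-\widetilde{N}_1}^{\widetilde{N}_1} \sum_{n_2=-\widetilde{N}_2}^{\widetilde{N}_2} e^{jc n_1 n_2} \big|$.

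Second, I would perform the $n_2$-summation in closed form: for fixed $n_1$ it is a geometric series of $N_2 = 2\widetilde{N}_2+1$ terms with ratio $e^{jc n_1}$, so $\sum_{n_2} e^{jc n_1 n_2} = \sin(N_2 c n_1/2)/\sin(c n_1/2) = N_2\,\Xi_{N_2}(c n_1)$, where $\Xi_{N_2}$ is the Dirichlet sinc function defined beneath~\eqref{eq: far gain}. This reduces the gain to the one-dimensional sum ${\widetilde{g}} = \frac{1}{N_1}\big| \sum_{n_1=-\widetilde{N}_1}^{\widetilde{N}_1} \Xi_{N_2}(c n_1) \big|$.

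Third, to bring out the sine integral I would use that $c$ is small in the regime where neglecting the bilinear term is legitimate (i.e. $N = N_1 N_2$ not extremely large, equivalently $\widetilde{N}_i d \ll r$), so that $\sin(c n_1/2)\approx c n_1/2$ uniformly over the summation range and hence $\Xi_{N_2}(c n_1)\approx 2\sin(N_2 c n_1/2)/(N_2 c n_1)$. Because this summand varies slowly as $n_1$ steps by one, I would replace the sum by $\int_{-\widetilde{N}_1}^{\widetilde{N}_1}\frac{2\sin(N_2 c x/2)}{N_2 c x}\,\mathrm{d}x$; the substitution $u = N_2 c x/2$ together with $\widetilde{N}_1\approx N_1/2$ maps the endpoints to $\pm\eta$ with $\eta = N_1 N_2 c/4 = \frac{N_1 N_2 k d^2 \cos\theta\sin\theta\sin\phi}{4r}$, and evaluates the integral as $\frac{4}{N_2 c}\,{\rm{Si}}(\eta)$. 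Dividing by $N_1$ and using $4/(N_1 N_2 c) = 1/\eta$ gives ${\widetilde{g}}\approx |{\rm{Si}}(\eta)/\eta|$, which is the claim.

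Finally, a word on where the effort lies. The geometric-series evaluation and the change of variables are routine; the two places carrying the ``$\approx$'' are the small-angle linearization of the denominator $\sin(c n_1/2)$ and the passage from the $n_1$-sum to an integral. Both are controlled by the smallness of $c$, so the main obstacle is to state this regime precisely and to bound the approximation error --- I expect the Riemann-sum step to be the more delicate one, since the summand is merely slowly varying rather than monotone, but a first-order Euler--Maclaurin estimate is enough for the order of accuracy claimed in the lemma.
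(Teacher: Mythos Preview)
Your argument is correct and reaches the same conclusion, but the route differs slightly from the paper's. The paper passes immediately from the double sum to a double integral $\frac{1}{N}\bigl|\iint e^{j\varsigma n_1 n_2}\,\mathrm{d}n_1\,\mathrm{d}n_2\bigr|$, evaluates the inner $n_2$-integral in closed form to obtain $\frac{2\sin(\varsigma n_1 N_2/2)}{\varsigma n_1}$ (discarding an odd imaginary part), and then integrates in $n_1$ to reach $\mathrm{Si}(\eta)/\eta$. You instead keep the $n_2$-sum exact as a Dirichlet kernel and only then apply the small-angle replacement $\sin(cn_1/2)\approx cn_1/2$ before converting the remaining $n_1$-sum to an integral. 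The two routes are equivalent at the level of accuracy claimed: the paper's single sum-to-integral step hides exactly the same small-$c$ condition you spell out, and your exact inner sum produces a manifestly real summand, sparing you the odd-function argument the paper uses to kill the imaginary contribution. Your version is a touch more transparent about where the approximations enter; the paper's is a touch quicker. Either is fine for the lemma as stated.
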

\begin{proof}
With accurate distance term defined in~\eqref{eq: near-field UPA distance term} and approximated distance term defined in~\eqref{eq: cross term approximation}, the only difference lies in the bilinear quadratic term. By assuming $\varsigma = \frac{kd^2\cos\theta\sin\theta\sin\phi}{r}$, the beamforming gain can be formulated as
\begin{equation}
\label{eq: beamforming gain upa}
\begin{aligned}
{\widetilde{g}} =& \left |{\widetilde{\bf{b}}}^H(r, \theta, \phi) {\bf{b}}(r, \theta, \phi) \right| \\
=& \frac{1}{N} \left | \sum_{n_1=-\widetilde{N}_1}^{\widetilde{N}_1} \sum_{n_2=-\widetilde{N}_2}^{\widetilde{N}_2} e^{j\varsigma n_1n_2} \right| \\
 \mathop{\approx}\limits^{(a)}& \frac{1}{N} \left| \int_{-N_1/2}^{N_1/2} \int_{-N_2/2}^{N_2/2} e^{j\varsigma n_1n_2} {\rm{d}}n_1 {\rm{d}}n_2 \right|\\
 =& \frac{1}{N} \left| \iint \cos(\varsigma n_1n_2) {\rm{d}}n_1 {\rm{d}}n_2 + j\iint \sin(\varsigma n_1n_2) {\rm{d}}n_1 {\rm{d}}n_2 \right|\\
= & \frac{1}{N} \left| \left[\int \frac{\sin(\varsigma n_1n_2)}{\varsigma n_1} {\rm{d}}n_1 - j\int \frac{\cos(\varsigma n_1n_2)}{\varsigma n_1} {\rm{d}}n_1 \right]_{n_2 = -\frac{N_2}{2}}^{\frac{N_2}{2}} \right|,\\
\end{aligned}
\end{equation}
Approximation (a) is obtained by replacing summation with integral and the limit of integral is omitted for simplicity. Owing to the odd function property of $\frac{\cos(x)}{x}$, the latter integral equals $0$. Then, the approximated beamforming gain is reformulated as 
\begin{equation}
\label{eq: beamforming gain upa2}
\begin{aligned}
{\widetilde{g}} = \frac{1}{N} &\left| \int_{-N_1/2}^{N_1/2} \frac{2}{\varsigma n_1} \sin(\varsigma {\widetilde{N}}_2n_1) {\rm{d}}n_1 \right|\\
 = & \left| \frac{1}{\eta} \int_{0}^{\eta} \frac{\sin(t)}{t} {\rm{d}}t \right| = \left| \frac{1}{\eta} {\rm{Si}}(\eta) \right|,\\
\end{aligned}
\end{equation}
which completes the proof.
\end{proof}

\begin{remark}
{\bf{Lemma}~\ref{lemma3}} has characterized the beamforming loss introduced by neglecting the intractable bilinear quadratic term. We shall show that the beamforming loss is commonly negligible with a numerical example. If we aim to ensure a beamforming loss less than threshold $\Delta$, using the monotonic descent property of $\frac{{\rm{Si}}(\eta)}{\eta}$ in the range $[0,\pi]$, the variable $\eta$ has to satisfy $\eta < \eta(\Delta)$ where $\frac{{\rm{Si}}(\eta(\Delta))}{\eta(\Delta)} = \Delta$. Then, according to the definition of $\eta$, a sufficient condition for $r$ is $\frac{N_1N_2kd^2\cos\theta\sin\theta\sin\phi}{4\eta(\Delta)} \leq \frac{\sqrt{3} \pi}{8\lambda \eta(\Delta)}N_1N_2d^2 \leq r$, which is commonly less than Rayleigh distance $r_{\rm{RD}} = \frac{2(D_1^2+D_2^2)}{\lambda}$. For instance, if we consider a $256 \times 16$ UPA operating at 30 GHz and threshold $\Delta = 5\%$, the minimum required distance is $7.24$ m. This indicates that when communication distance $r$ exceeds $7.24$ m, the beamforming loss by neglecting the bilinear term is at most $5\%$. Therefore, the omission of the bilinear term only introduces limited beamforming loss, but it brings convenience to the UPA codebook design, which is discussed as follows.
\end{remark} 

\begin{lemma}
\label{lemma4}
\par If the approximated near-field beam focusing vector is adopted, the correlation of two beam focusing vectors focusing in the same direction but different distances can be formulated as
\begin{equation}
\label{eq: correlation in distance upa}
\begin{aligned}
& ~~~\left|{\widetilde{\bf{b}}}^H(r_l, \theta, \phi){\widetilde{\bf{b}}}(r_m, \theta, \phi)\right|\\
&= \left|\sum_{n_1} \sum_{n_2} e^{jk(n_1^2d^2(1-\cos^2\theta) + n_2^2d^2(1-\sin^2\theta\sin^2\phi))(\frac{1}{2r_l}-\frac{1}{2r_m})} \right|\\
&\approx \left| G(\beta_1) G(\beta_2) \right|,
\end{aligned}
\end{equation}
where $G(\beta) = \frac{C(\beta)+jS(\beta)}{\beta}$, $C(\cdot)$ and $S(\cdot)$ denote the Fresnel functions written as $C(x) = \int_0^x \cos(\frac{\pi}{2}t^2){\rm{d}}t$ and $S(x) = \int_0^x \sin(\frac{\pi}{2}t^2){\rm{d}}t$. In addition, $\beta_1 = \sqrt{\frac{N_1^2d^2(1-\cos^2\theta)}{2\lambda}|\frac{1}{r_l}-\frac{1}{r_m}|}$ and $\beta_2 = \sqrt{\frac{N_2^2d^2(1-\sin^2\theta\sin^2\phi)}{2\lambda}|\frac{1}{r_l}-\frac{1}{r_m}|}$.
\end{lemma}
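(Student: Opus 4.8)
The plan is to start from the definition of the approximated near-field beam focusing vector $\widetilde{\bf{b}}(r,\theta,\phi)$ built from the distance term $\widetilde{r}_l^{(n_1,n_2)}$ in~\eqref{eq: cross term approximation}, which by construction has no bilinear $n_1 n_2$ cross term. When I form the inner product $\widetilde{\bf{b}}^H(r_l,\theta,\phi)\widetilde{\bf{b}}(r_m,\theta,\phi)$, the linear-in-$n_1$ and linear-in-$n_2$ phase terms ($-n_1 d\cos\theta$ and $-n_2 d\sin\theta\sin\phi$) are identical for both vectors since the angles $\theta,\phi$ agree, so they cancel exactly. The constant $r_l$ versus $r_m$ terms cancel too (they factor out as a unit-modulus scalar). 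What survives is precisely the quadratic part: a product of two independent exponentials, one in $n_1^2$ and one in $n_2^2$, with the $n_1^2$ coefficient proportional to $d^2(1-\cos^2\theta)\left(\tfrac{1}{2r_l}-\tfrac{1}{2r_m}\right)$ and the $n_2^2$ coefficient proportional to $d^2(1-\sin^2\theta\sin^2\phi)\left(\tfrac{1}{2r_l}-\tfrac{1}{2r_m}\right)$. This gives the middle line of~\eqref{eq: correlation in distance upa}, and crucially the double sum factors as a product of a sum over $n_1$ and a sum over $n_2$.

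The next step is to evaluate each one-dimensional quadratic-phase sum asymptotically. Each factor has the form $\frac{1}{N_i}\sum_{n_i=-\widetilde N_i}^{\widetilde N_i} e^{j a_i n_i^2}$ with $a_i>0$ a small quantity. Following exactly the argument used to prove {\bf Lemma~\ref{lemma1}} (the ULA case in~\cite{Cui'22'tcom}), I would replace the sum by an integral $\frac{1}{N_i}\int_{-N_i/2}^{N_i/2} e^{j a_i t^2}\,\mathrm{d}t$, then perform the substitution $u = t\sqrt{2 a_i/\pi}$ to bring the integral into the standard Fresnel form $\int_0^{\beta_i}\cos(\tfrac{\pi}{2}u^2)\,\mathrm{d}u + j\int_0^{\beta_i}\sin(\tfrac{\pi}{2}u^2)\,\mathrm{d}u = C(\beta_i)+jS(\beta_i)$, with the upper limit turning out to be $\beta_i = \sqrt{\frac{N_i^2 d^2 (1-\cos^2\theta)}{2\lambda}\left|\tfrac{1}{r_l}-\tfrac{1}{r_m}\right|}$ for $i=1$ (and the analogous expression with $1-\sin^2\theta\sin^2\phi$ for $i=2$), after substituting $k=2\pi/\lambda$ and accounting for the normalization $1/N_i$. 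This yields $\left|\tfrac{1}{N_i}\sum e^{j a_i n_i^2}\right| \approx \left|\frac{C(\beta_i)+jS(\beta_i)}{\beta_i}\right| = |G(\beta_i)|$. Multiplying the two factors gives $|G(\beta_1) G(\beta_2)|$, which is the claimed result.

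The main obstacle, and the step requiring the most care, is the summation-to-integral replacement together with the bookkeeping of constants in the change of variables. The sum is over the symmetric index range $-\widetilde N_i$ to $\widetilde N_i$ with $\widetilde N_i = (N_i-1)/2$, whereas the integral limits are naturally written as $\pm N_i/2$; one must argue (as in the ULA proof) that the discrepancy is negligible and, more importantly, that the oscillatory sum is well-approximated by the integral — this is where the ``$\approx$'' in the lemma statement hides a Fresnel-type asymptotic estimate that is standard but must be invoked carefully. Extracting the precise form of $\beta_i$ requires tracking the factor $k/(2r_l) - k/(2r_m) = \frac{\pi}{\lambda}\left(\tfrac{1}{r_l}-\tfrac{1}{r_m}\right)$ through the substitution $u^2 = \frac{2 a_i}{\pi} t^2$ so that the phase becomes exactly $\frac{\pi}{2}u^2$, and then reading off $\beta_i$ from $t = N_i/2$; absolute-value signs absorb the sign of $\tfrac{1}{r_l}-\tfrac{1}{r_m}$. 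Once the one-dimensional estimate is established, the factorization of the double sum makes the two-dimensional case immediate, so essentially all the work is in the single quadratic-phase sum, which is exactly the computation underpinning {\bf Lemma~\ref{lemma1}} and can largely be cited from there.
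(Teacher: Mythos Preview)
Your proposal is correct and follows essentially the same route as the paper's proof: cancel the linear phase terms by the common-angle assumption, factor the resulting separable double sum into a product of two one-dimensional quadratic-phase sums, replace each sum by an integral over $[-N_i/2,N_i/2]$, and substitute to bring each integral into the Fresnel form $\frac{C(\beta_i)+jS(\beta_i)}{\beta_i}=G(\beta_i)$. The only cosmetic difference is that the paper factors after passing to the integral while you factor at the sum level, which is immaterial.
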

\begin{proof}
The proof is provided in Appendix~\ref{app: proof of lemma4}.
\end{proof}
This lemma provides a distance sampling method, where the correlation of two vectors concentrating on $(r_l, \theta, \phi)$ and $(r_m, \theta, \phi)$ can be analytically approximated. Noting that $\beta_1$ and $\beta_2$ share some common components for fixed spatial angles $(\theta, \phi)$, we define a new function as
\begin{equation}
\label{eq: correlation in distance upa one-variable}
\begin{aligned}
\bar{G}(\beta_0) = G(\beta_1) G(\beta_2),
\end{aligned}
\end{equation}
where $\beta_0 = \sqrt{\frac{d^2}{2\lambda}|\frac{1}{r_l}-\frac{1}{r_m}|}$. Thus, $\beta_1$ and $\beta_2$ can be determined by $\beta_0$ as $\beta_1 = N_1\beta_0\sqrt{1-\cos^2\theta}$ and $\beta_2 = N_2\beta_0\sqrt{1-\sin^2\theta\sin^2\phi}$, respectively. To reveal the trend of $|\bar{G}(\beta_0)|$, a numerical integration of $|\bar{G}(\beta_0)|$ against $\beta_0$ is performed and the result is shown in Fig.~\ref{img: distance sampling UPA}.

\begin{figure}[!t]
	\centering
	\setlength{\abovecaptionskip}{0.cm}
	\includegraphics[width=3in]{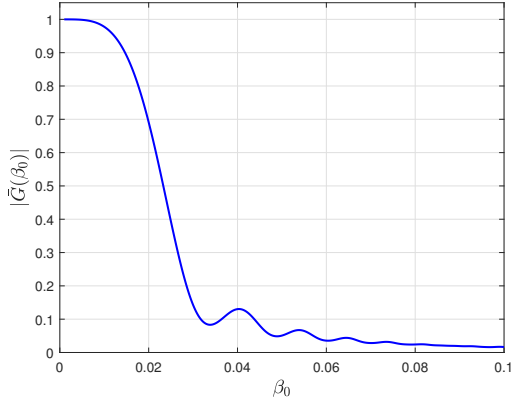}
	\caption{Numerical results of $|\bar{G}(\beta_0)|$ against $\beta_0$ for a $64 \times 64$ UPA system working at $30$ GHz sampled at $\phi=\pi/6$ and $\theta=\pi/3$.}
	\label{img: distance sampling UPA}
\end{figure}

Due to the descending trend of $|\bar{G}(\beta_0)|$, $|\bar{G}(\beta_0)| \leq \Delta$ can be approximately ensured if $\beta_0 \geq \beta_{\Delta}$, where $\bar{G}(\beta_{\Delta}) = \Delta$. Therefore, the correlation of beam focusing vectors focusing on the same angle can be lower than a predefined threshold $\Delta$ by controlling the sampling density $\left|\frac{1}{r_l}-\frac{1}{r_m}\right|$ as
\begin{equation}
\label{eq: distance sampling density}
\begin{aligned}
&\left|\frac{1}{r_l}-\frac{1}{r_m}\right| \geq \frac{2\lambda \beta_{\Delta}^2}{d^2}.
\end{aligned}
\end{equation}
It is worth noting that although $G(\cdot)$ can not be straightforwardly calculated analytically, it can be numerically calculated and stored for repeated usage. Therefore, even though $\beta_{\Delta}$ has to be searched for different $\theta$ and $\phi$, we only need to calculate $G(\cdot)$ once, which avoids repeatedly performing numerical integral. Finally, an efficient distance sampling method can be obtained with 
\begin{equation}
\label{eq: distance sampling on distance}
\begin{aligned}
r_s(\theta, \phi) = \frac{1}{s} \frac{d^2}{2\lambda \beta_{\Delta}^2},\ s=0,1,2,\cdots,
\end{aligned}
\end{equation}
where $s$ denotes the sampling index. The index $s=0$ represents the beamforming vector focusing the energy on infinite distance, where the near-field beamforming vector degenerates into the far-field beamforming vector. Therefore, the far-field UPA codebook can be viewed as a special case of the proposed near-field spherical-domain codebook and is contained in the proposed near-field codebook. When mixed-field communications happen, i.e. users having near-field or far-field channels are mixed, the proposed codebook could still work by allocating near (far) -field beams to near (far) -field users.
\par The proposed spherical-domain codebook design method is summarized in {\bf{Algorithm}~\ref{alg:1}}. To characterize the locations of the samples, the location with $s=1$ for each direction is shown in Fig.~\ref{img: distance sampling 3D}, where a $64 \times 64$ UPA operating at $30$ GHz is employed. In addition, each beamforming vector of the designed codebook $\mathcal{W}$ is of constant modulus, which coincides with the requirement of analog phase shifters in the hybrid precoding architectures. Therefore, the near-field UPA codebook is naturally applicable to analog precoding in multi-user MIMO systems.
\begin{figure}[!t]
	\centering
	\subfigbottomskip=2pt
	\subfigcapskip=-5pt
	\subfigure[Sampling points for $s=1$]{
		\label{3d.sub.1}
		\includegraphics[width=2.5in]{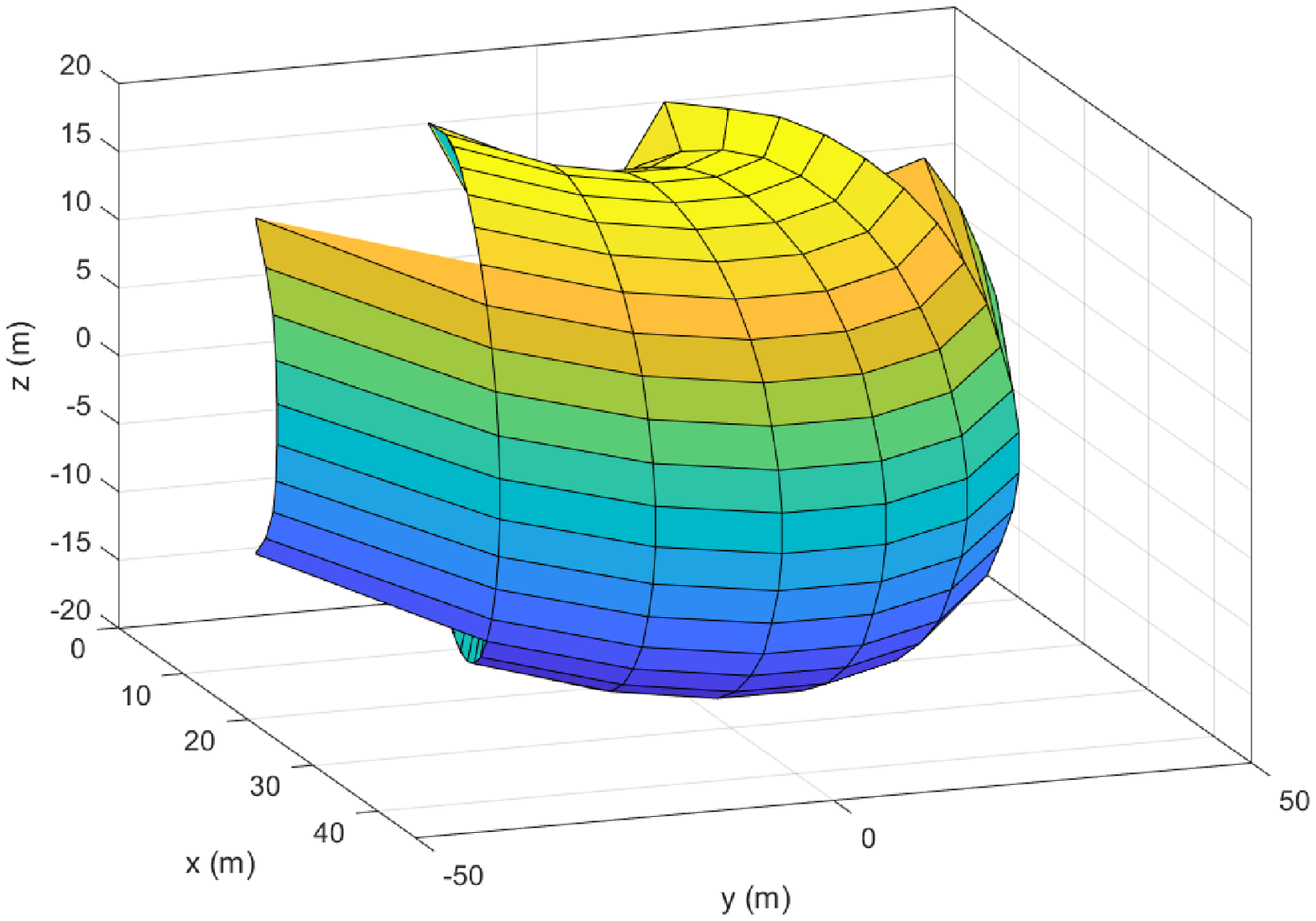}}
	\subfigure[Sampling points for $\theta=0$]{
		\label{3d.sub.2}
		\includegraphics[width=1.5in]{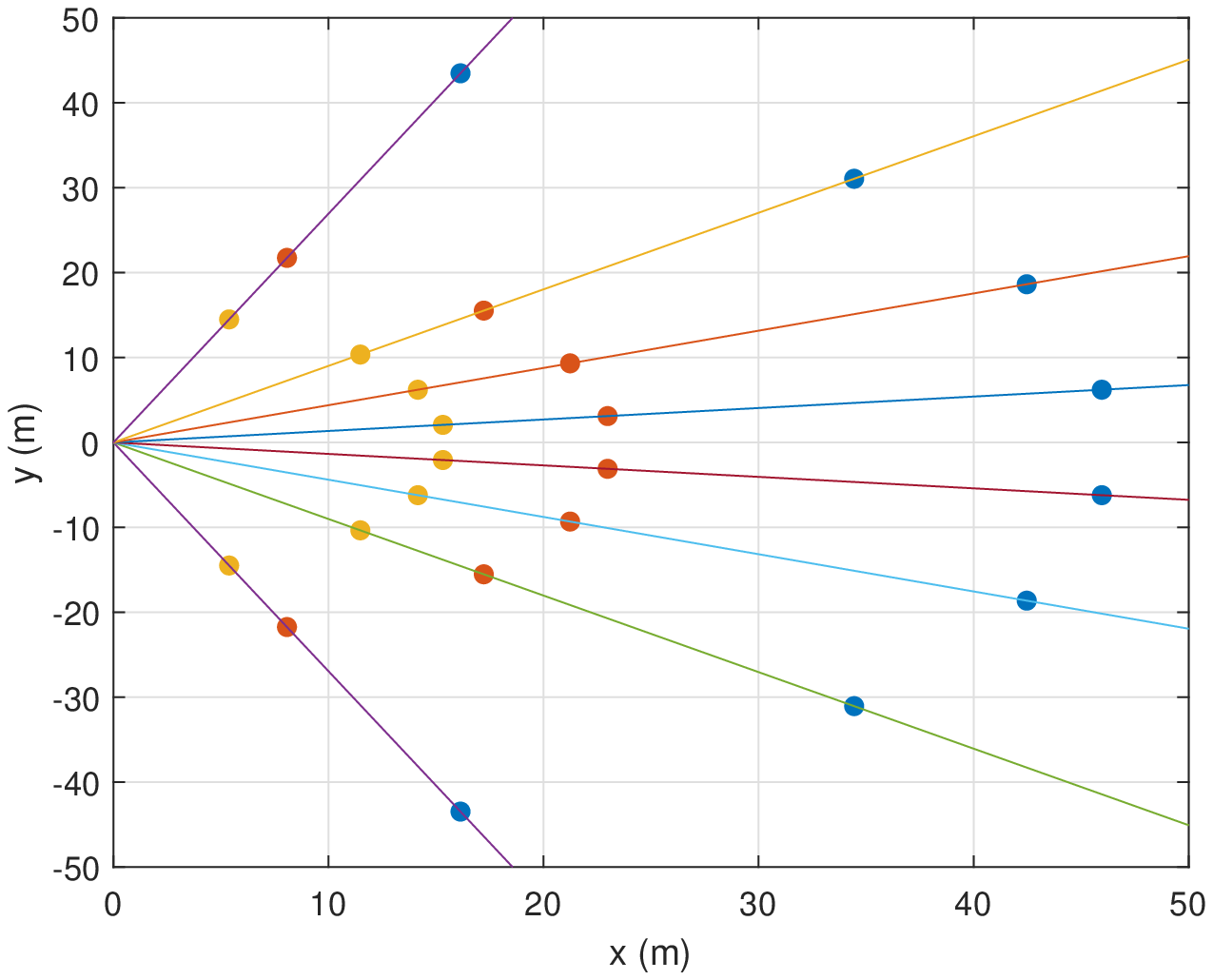}}
	\subfigure[Sampling points for $\phi=0$]{
		\label{3d.sub.3}
		\includegraphics[width=1.5in]{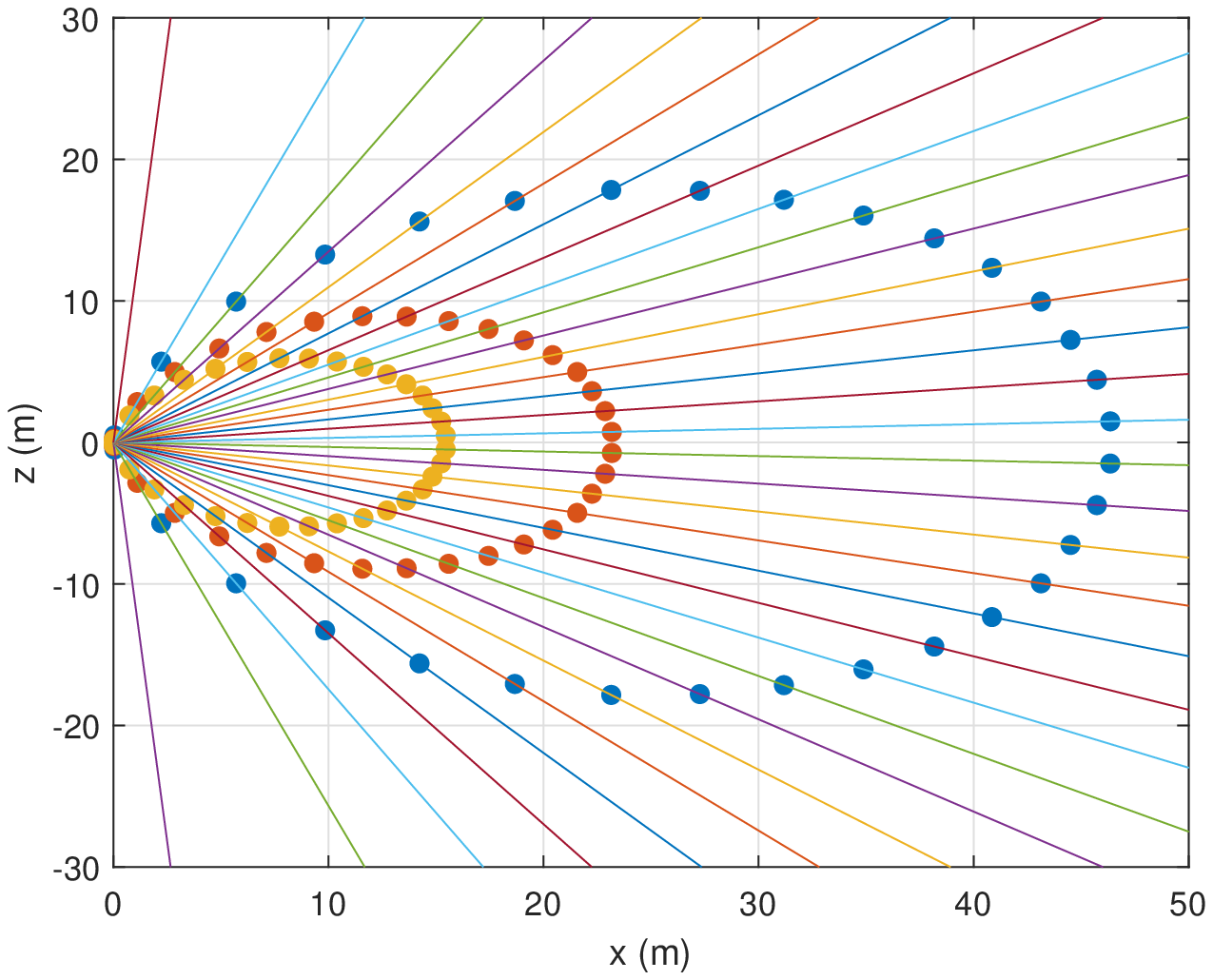}}
	\caption{An example of the sampling points according to the spherical-domain codebook. The intersections of black lines in (a) denote the sampling points corresponding to focusing position of near-field beams. The colored lines and colored dots in (b), (c) denote the sampling angles and sampling points, respectively.}
	\label{img: distance sampling 3D}
\end{figure}

\begin{algorithm}[!t] 
	\caption{Spherical-domain codebook design procedure.} 
	\label{alg:1} 
	\begin{algorithmic}[1] 
		\REQUIRE ~ 
		Minimum allowable distance $\rho_{\rm min}$, threshold $\Delta$, antennas $N_1, N_2$, antenna spacing $d$, wavelength $\lambda$
		\ENSURE ~ 
        Near-field Codebook ${\mathcal{W}}$
		\STATE $s=0$;
		\WHILE{$s=0$ or $\mathop{\rm{max}}\limits_{n1,n2}~r_{s}(\theta_{n1},\phi_{n2}) \geq \rho_{\rm min}$}
		\FOR{$n_1 = 0,1,\cdots,N_1-1$}
		\STATE Select $\theta_{n1}$ according to~\eqref{eq: angular sampling on angle};
		\FOR{$n_2 = 0,1,\cdots,N_2-1$}
		\STATE Select $\phi_{n2}$ according to~\eqref{eq: angular sampling on angle};
		\STATE Search for $\beta_{\Delta}$ satisfying $\bar{G}(\beta_{\Delta}) = \Delta$;
		\STATE Determine $r_{s}(\theta_{n1},\phi_{n2}) = \frac{1}{s} \frac{d^2}{2\lambda\beta_{\Delta}^2}$ by~\eqref{eq: distance sampling on distance};
		\ENDFOR
		\ENDFOR
        \STATE $\mathcal{W}_s = \left\{{\bf{b}}(r_{s}(\theta_{n_1},\phi_{n_2}), \theta_{n_1}, \phi_{n_2})\right\}_{n_i = 0,1,\cdots,N_i-1}$;
        \STATE $S = s$, $s=s+1$;
        \ENDWHILE
		\STATE $\mathcal{W} = [\mathcal{W}_0, \mathcal{W}_1, \cdots, \mathcal{W}_S ]$;
		\RETURN $\mathcal{W}$.
	\end{algorithmic}
\end{algorithm}

\subsection{LDMA Communication Procedure}\label{sec: ldma procedure}
\par The detailed communication procedure of LDMA in TDD systems is discussed in this subsection. The proposed LDMA scheme comprises three main stages, including: Initial access, uplink equivalent channel estimation, and uplink/downlink data transmission, which is shown in Fig.~\ref{img: ldma procedure}.
\par In the initial access procedure, BS could perform beam sweeping by virtue of synchronization signal (SS) bursts according to 3rd Generation
Partnership Project (3GPP) specifications to establish a physical link connection to idle UEs~\cite{R15'38.211,R15'doc'17}. The beam sweeping is usually performed based on a predefined codebook~\cite{Giordani'18't}. However, as discussed before, classical far-field DFT-based codebooks are no longer suitable for near-field communications, a near-field codebook design method has to be investigated. After establishing a $M$-codeword near-field codebook $\mathcal{W} = [w_1, \cdots, w_M]$, BS could select the codeword one in a SS block from the predefined codebook periodically to search for the best beam. Then, the UE could report the beam quality and beam decision through random access channel (RACH) resources. Afterward, the best beam could be employed at BS as the uplink combiner or downlink precoder for subsequent channel estimation and data transmissions.
\begin{figure}[!t]
	\centering
	\setlength{\abovecaptionskip}{0.cm}
	\includegraphics[width=3in]{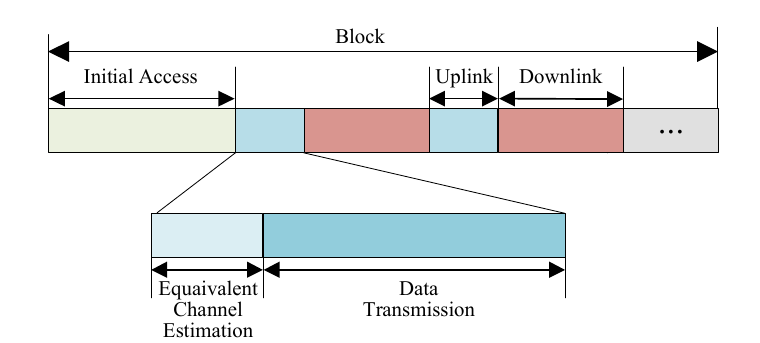}
	\caption{Diagram of the communication procedure for LDMA scheme.}
	\label{img: ldma procedure}
\end{figure}

\par After the classical beam sweeping in the initial access procedure, we assume each served UE is scheduled with a specific codeword as the analog combiner/precoder ${\bf{w}}_k$. Then, the analog precoding for all connected UEs can be designed as ${\bf{F}}_{\rm{A}} = [{\bf{w}}_1, {\bf{w}}_2, \cdots, {\bf{w}}_K]$. Afterward, in the uplink stage, the BS could estimate the effective channel by non-orthogonal pilots sent from all UEs as
\begin{equation}
\label{eq: effective channel}
\begin{aligned}
{\overline{\bf{h}}}_{k} = {\bf{F}}_{\rm{A}}^H {\bf{h}}_k + {\bf{F}}_{\rm{A}}^H {\bf{n}}_k,
\end{aligned}
\end{equation}
where ${\bf{n}}_k$ denotes the noise corresponding to the $k^{th}$ UE.
Finally, the BS could design the digital precoder through the estimated effective channel by weighted minimum mean square error (WMMSE) as in \cite{He'11'j} or zero-forcing (ZF) as
\begin{equation}
\label{eq: zero forcing}
\begin{aligned}
{\bf{F}}_{\rm{D}} = {\overline{\bf{H}}}^H \left({\overline{\bf{H}}}\ {\overline{\bf{H}}}^H\right)^{-1} {\bf{\Lambda}},
\end{aligned}
\end{equation}
where ${\overline{\bf{H}}} = [{\overline{\bf{h}}}_1, \cdots, {\overline{\bf{h}}}_K]^H$ denotes the effective channel of all $K$ UEs. The diagonal matrix ${\bf{\Lambda}}$ denotes the power allocation for different UEs, which is designed to satisfy $\|{\bf{F}}_{\rm{A}} {\bf{f}}_{{\rm{D}},k}\|^2 = 1$. To sum up, the LDMA scheme can be summarized in {\bf{Algorithm}~\ref{alg:2}}.
\begin{algorithm}[!t] 
	\caption{Location Division Multiple Access.} 
	\label{alg:2} 
	\begin{algorithmic}[1]
		\REQUIRE ~ 
		Multi-user channel ${\bf{H}}$.
		\ENSURE ~ 
        Digital precoder ${\bf{F}}_{\rm{D}}$ and analog precoder ${\bf{F}}_{\rm{A}}$ 
		\STATE {\bf{Codebook Design:}}
        \STATE Constructs the near-field codebook $\mathcal{W}$;
        \STATE {\bf{Initial Access:}} 
        \STATE BS performs beam sweeping and each UE performs beam determination and reports the beam decision to BS;
		\STATE BS selects the best codeword ${\bf{w}}_k$ for $k^{\rm th}$ UE from $\mathcal{W}$ and construct the analog precoder ${\bf{F}}_{\rm{A}} = [{\bf{w}}_1, {\bf{w}}_2, \cdots, {\bf{w}}_K]$;
        \STATE {\bf{Uplink Equivalent Channel Estimation:}} 
        \STATE Each UE sends uplink non-orthogonal pilots $x_{k}^{p}$;
        \STATE BS estimates the effective channel for $k^{th}$ UE by~\eqref{eq: effective channel};
        \STATE BS designs digital combiner and precoder ${\bf{F}}_{\rm{D}}$ by~\eqref{eq: zero forcing};
        \STATE {\bf{Uplink/Downlink Data Transmission:}}
		\STATE BS performs hybrid combining/precoding with ${\bf{F}}_{\rm{D}}$ and ${\bf{F}}_{\rm{A}}$;
		\RETURN Digital precoder ${\bf{F}}_{\rm{D}}$ and analog precoder ${\bf{F}}_{\rm{A}}$.
	\end{algorithmic}
\end{algorithm}

\par It is worth noting that, the communication procedure proposed in this paper is not the only way to realize LDMA. Following the idea of acquiring near-field channels and exploiting the communication resources in the extra dimension, other near-field channel estimations~\cite{Cui'22'tcom,Friedlander'19'tsp}, beam sweeping~\cite{You'22'beam} and precoding methods~\cite{zhang'22'j} can also be incorporated into the LDMA scheme.

\section{Performance Analysis}\label{sec: per ana}
In this section, the performance analysis of LDMA is provided. We first show that the asymptotic spectrum efficiency of proposed LDMA could achieve the ideal capacity without multi-user interferences. Then, we explore the performance gain under a special scenario where multiple UEs are located within the same direction, which reveals that LDMA could elaborately utilize the spatial resources to enhance system performance compared with classical far-field SDMA.
\subsection{Asymptotic Capacity of Single-path Channel}\label{sec: asy single}
To investigate the system performance of the proposed LDMA, we first assume the single-path channel for each UE for ULA systems for simplicity. Then, the channel model could be rewritten as
\begin{equation}
\label{eq: single path}
\begin{aligned}
{\bf{h}}_k = \sqrt{N} \alpha_k {\bf{b}}(r_k, \theta_k, \phi_k),
\end{aligned}
\end{equation}
where $\alpha_k$, $r_k$, $\theta_k$ and $\phi_k$ denote the complex path gain, distance between the UE and the array center, and the elevation and azimuth angle of the $k^{\rm th}$ UE, respectively. Here we assume that the BS acquires the perfect channel state information. In addition, the infinite analog codebook is adopted, which means no quantization of codebooks has to be performed and the BS could perfectly select the conjugate transpose of near-field beam focusing vector as the analog precoder. Therefore, the optimal analog precoder for all $K$ UEs can be written as ${\bf{F}}_{\rm{A}} = {\bf{B}} = [{\bf{b}}(r_1, \theta_1, \phi_1), \cdots, {\bf{b}}(r_K, \theta_K, \phi_K)]$. For analysis simplicity, we assume ZF is adopted for designing the digital precoder. In addition, the large-scale fading is neglected and thus different UEs share the same path gain. Then, the spectrum efficiency $R$ can be formulated through the following lemma.

\begin{lemma}
\label{lemma5}
Assume that the single-path channel model is adopted for all $K$ UEs, and the BS acquires perfect channel state information. Then, with equal power allocation, the downlink spectrum efficiency achieved by {\bf{Algorithm}~\ref{alg:2}} is given by 
\begin{equation}
\label{eq: EE single}
\begin{aligned}
R_N = \sum_{k=1}^K R_k = \sum_{k=1}^K \log_2 \left(1+\frac{P}{K\sigma^2} \frac{N |\alpha_k|^2}{[{\bf{B}}^H{\bf{B}}]_{k,k}^{-1}} \right),
\end{aligned}
\end{equation}
where $P$ denotes the total transmission power, $[{{\bf{B}}^{H}}{\bf{B}}]_{k,k}^{-1}$ denotes the $k^{\rm th}$ diagonal entry of the matrix $({{\bf{B}}^{H}}{\bf{B}})^{-1}$, and $\sigma^2$ is the noise power at all the $K$ users. 
\end{lemma}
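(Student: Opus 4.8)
The plan is to propagate the zero-forcing digital precoder of~\eqref{eq: zero forcing} through the Gram matrix of the analog precoder, confirm that it nulls all inter-user interference, and then read off the per-user SINR from the power-normalization constraint. Throughout I write $\mathbf{G}=\mathbf{B}^H\mathbf{B}$ for the $K\times K$ Hermitian Gram matrix of the beam focusing vectors (assumed invertible, which holds whenever the $K$ users occupy distinct locations and is in any case ensured for large $N$ by the asymptotic orthogonality of Section~\ref{sec: analysis}) and $\mathbf{D}_\alpha=\mathrm{diag}(\alpha_1,\dots,\alpha_K)$.

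\emph{Effective channel and ZF precoder.} The first step is to form the effective channel. Since $\mathbf{F}_{\rm A}=\mathbf{B}$ and $\mathbf{h}_k=\sqrt{N}\alpha_k\mathbf{b}(r_k,\theta_k,\phi_k)$, the effective channel of~\eqref{eq: effective channel}, taken noise-free because the BS has perfect CSI, is $\overline{\mathbf{h}}_k=\mathbf{B}^H\mathbf{h}_k=\sqrt{N}\alpha_k\mathbf{g}_k$ with $\mathbf{g}_k$ the $k$-th column of $\mathbf{G}$, which assembles into $\overline{\mathbf{H}}=\sqrt{N}\,\mathbf{D}_\alpha^H\mathbf{G}$; note $\mathbf{H}\mathbf{F}_{\rm A}=\overline{\mathbf{H}}$ holds exactly here, so the nulling will be perfect. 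Substituting into~\eqref{eq: zero forcing} and using $\mathbf{G}^H=\mathbf{G}$ yields $\overline{\mathbf{H}}\,\overline{\mathbf{H}}^H=N\mathbf{D}_\alpha^H\mathbf{G}^2\mathbf{D}_\alpha$ and then, after cancellation, $\mathbf{F}_{\rm D}=\tfrac{1}{\sqrt{N}}\mathbf{G}^{-1}(\mathbf{D}_\alpha^H)^{-1}\mathbf{\Lambda}$. Hence $\mathbf{H}\mathbf{F}_{\rm A}\mathbf{F}_{\rm D}=\overline{\mathbf{H}}\mathbf{F}_{\rm D}=\mathbf{\Lambda}$ is diagonal, the interference sum $\sum_{l\neq k}p_l|\mathbf{h}_k^H\mathbf{F}_{\rm A}\mathbf{f}_{{\rm D},l}|^2$ in~\eqref{eq: spectrum efficiency} vanishes, and with equal power $p_k=P/K$ one is left with $R_k=\log_2\!\big(1+\tfrac{P}{K\sigma^2}|\Lambda_{k,k}|^2\big)$.

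\emph{The normalization constraint.} It then remains to evaluate $|\Lambda_{k,k}|^2$ using $\|\mathbf{F}_{\rm A}\mathbf{f}_{{\rm D},k}\|^2=1$. Since $(\mathbf{D}_\alpha^H)^{-1}$ and $\mathbf{\Lambda}$ are diagonal, $\mathbf{f}_{{\rm D},k}=\tfrac{\Lambda_{k,k}}{\sqrt{N}\alpha_k^{*}}(\mathbf{G}^{-1})_{:,k}$, so $\|\mathbf{F}_{\rm A}\mathbf{f}_{{\rm D},k}\|^2=\mathbf{f}_{{\rm D},k}^H\mathbf{G}\mathbf{f}_{{\rm D},k}=\tfrac{|\Lambda_{k,k}|^2}{N|\alpha_k|^2}(\mathbf{G}^{-1})_{:,k}^H\mathbf{G}(\mathbf{G}^{-1})_{:,k}$. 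The key identity, using $\mathbf{G}^{-H}=\mathbf{G}^{-1}$, is $(\mathbf{G}^{-1})_{:,k}^H\mathbf{G}(\mathbf{G}^{-1})_{:,k}=\mathbf{e}_k^H\mathbf{G}^{-1}\mathbf{e}_k=[\mathbf{B}^H\mathbf{B}]^{-1}_{k,k}$, which reduces the constraint to $|\Lambda_{k,k}|^2=N|\alpha_k|^2/[\mathbf{B}^H\mathbf{B}]^{-1}_{k,k}$. Inserting this into $R_k$ and summing over $k$ gives~\eqref{eq: EE single}.

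The steps are routine linear algebra; the points that will actually need care are (i) treating the estimated effective channel as noise-free here, so that $\mathbf{H}\mathbf{F}_{\rm A}=\overline{\mathbf{H}}$ exactly and the ZF nulling is perfect, and (ii) keeping straight the two distinct ways transmit power enters — the per-user factor $p_k$ in~\eqref{eq: spectrum efficiency} versus the per-column normalization enforced by $\mathbf{\Lambda}$. I expect the small Gram-matrix identity $(\mathbf{G}^{-1})_{:,k}^H\mathbf{G}(\mathbf{G}^{-1})_{:,k}=[\mathbf{G}^{-1}]_{k,k}$ to be the crux of the computation, with everything else being substitution.
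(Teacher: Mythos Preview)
Your proposal is correct and follows essentially the same approach as the paper's proof in Appendix~\ref{app: achievable rate}: both factor the effective channel as a diagonal times $\mathbf{B}^H\mathbf{B}$, verify that ZF yields the diagonal product $\mathbf{H}\mathbf{F}_{\rm A}\mathbf{F}_{\rm D}=\mathbf{\Lambda}$, and then extract $|\Lambda_{k,k}|^2$ from the per-column power constraint via the same Gram-matrix reduction to $[(\mathbf{B}^H\mathbf{B})^{-1}]_{k,k}$. The only difference is cosmetic ordering---you simplify $\mathbf{F}_{\rm D}$ first and then compute the norm, whereas the paper expands $\|\mathbf{F}_{\rm A}\mathbf{f}_{{\rm D},k}\|^2$ directly and cancels afterward---so the arguments are substantively identical.
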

\begin{proof}
The proof is provided in {\bf Appendix~\ref{app: achievable rate}}.
\end{proof}
For multi-user downlink MIMO systems, an ideal communication scenario is that multiple UEs could communicate with BS without any interference. The ideal capacity could be expressed as
\begin{equation}
\label{eq: capacity opt}
\begin{aligned}
\hat{R} = \sum_{k=1}^K \log_2 \left(1+\frac{P}{K\sigma^2} N|\alpha_k|^2 \right).
\end{aligned}
\end{equation}
Due to the asymptotic orthogonality proved in {\bf Corollary~\ref{coro2}}, the asymptotic spectrum efficiency could be shown to achieve the ideal capacity according to the following corollary.

\begin{corollary}
\label{coro3}
Assume a fixed number of UEs $K$. If the number of BS antennas $N$ tends to infinity, then the spectrum efficiency, as a function of $N$, approaches the ideal capacity with probability one, i.e., 
\begin{equation}
\label{eq: asymptotic spectrum efficiency}
\begin{aligned}
\mathbb{P}\left[ \lim_{N \to +\infty} R = \hat{R}\right] = 1.
\end{aligned}
\end{equation}
\end{corollary}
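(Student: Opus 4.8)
The plan is to reduce the claim, via \textbf{Lemma~\ref{lemma5}}, to a single linear-algebraic fact — that each diagonal entry $[{\bf{B}}^H{\bf{B}}]_{k,k}^{-1}$ converges to one as $N\to\infty$ — and then to cope with the divergence of $R$ itself by comparing $R$ and $\hat{R}$ term by term through a ratio. First I would record that, because every beam focusing vector ${\bf{b}}(r_k,\theta_k,\phi_k)$ appearing in ${\bf{B}}$ is normalized to unit norm, the Gram matrix ${\bf{B}}^H{\bf{B}}$ has an all-ones diagonal, while its $(k,l)$ off-diagonal entry is exactly the correlation ${\bf{b}}^H(r_k,\theta_k,\phi_k){\bf{b}}(r_l,\theta_l,\phi_l)$ analyzed in Section~\ref{sec: analysis}.

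Next I would establish the probability-one statement. The $K$ UE locations are drawn from a continuous distribution, so the event that some pair of UEs shares simultaneously the same angle and the same distance (and, for UPA, the same elevation, azimuth, and distance) is a measure-zero set; on its complement every pair $k\ne l$ satisfies $r_k\ne r_l$ or $\phi_k\ne\phi_l$ (or $\theta_k\ne\theta_l$). On this probability-one event, \textbf{Corollary~\ref{coro2}} (respectively \textbf{Lemma~\ref{lemma2}} in the UPA case) gives $[{\bf{B}}^H{\bf{B}}]_{k,l}\to 0$ for every $k\ne l$, so ${\bf{B}}^H{\bf{B}}={\bf{I}}_K+{\bf{E}}_N$ with ${\bf{E}}_N\to{\bf{0}}$ entrywise; since $K$ is fixed this is convergence in any matrix norm. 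Hence for all sufficiently large $N$ the matrix ${\bf{B}}^H{\bf{B}}$ is invertible — so the ZF precoder of \textbf{Algorithm~\ref{alg:2}} is well defined — and $({\bf{B}}^H{\bf{B}})^{-1}=({\bf{I}}_K+{\bf{E}}_N)^{-1}\to{\bf{I}}_K$ by continuity of matrix inversion at ${\bf{I}}_K$. In particular $c_{k,N}:=[{\bf{B}}^H{\bf{B}}]_{k,k}^{-1}\to 1$ for each $k$.

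Finally I would substitute into \textbf{Lemma~\ref{lemma5}}. With $a_{k,N}:=\frac{P}{K\sigma^2}N|\alpha_k|^2$, the per-user gap between the achieved rate $R_k$ and the interference-free rate $\hat{R}_k$ is
\begin{equation*}
R_k-\hat{R}_k=\log_2\frac{1+a_{k,N}/c_{k,N}}{1+a_{k,N}}=\log_2\frac{1/a_{k,N}+1/c_{k,N}}{1/a_{k,N}+1}.
\end{equation*}
Almost surely $|\alpha_k|>0$, so $a_{k,N}\to\infty$ (and in the degenerate case $\alpha_k=0$ the gap vanishes identically), while $c_{k,N}\to 1$; hence the right-hand side tends to $\log_2 1=0$. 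Summing the finitely many $K$ terms yields $R-\hat{R}\to 0$, i.e. $\lim_{N\to\infty}R=\hat{R}$ on the probability-one event, which is the assertion.

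The step I expect to be the main obstacle is controlling the two competing divergences: both $R$ and $\hat{R}$ grow like $K\log_2 N$, so one cannot pass to limits in $R$ and $\hat{R}$ separately and must instead argue that the leading $\log_2 N$ contributions cancel, leaving only the vanishing correction $-\log_2 c_{k,N}$ — which is why the ratio form above, rather than a direct difference of logarithms, is the right vehicle. A smaller point needing care is that Section~\ref{sec: analysis} establishes orthogonality pointwise for fixed distinct locations; here this suffices because $K$ is finite and the locations, once drawn, are fixed, but one still needs the measure-zero argument to guarantee that the drawn configuration lies in the regime where those results apply.
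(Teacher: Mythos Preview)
Your proof is correct and follows essentially the same route as the paper's: invoke \textbf{Corollary~\ref{coro2}} to show ${\bf{B}}^H{\bf{B}}\to{\bf{I}}_K$ almost surely, then substitute into \textbf{Lemma~\ref{lemma5}}. Your treatment is in fact more careful than the paper's, which glosses over both the measure-zero argument for the probability-one claim and the divergence issue you flag; the paper simply asserts that $\lim_{N\to\infty}{\bf{B}}^H{\bf{B}}={\bf{I}}$ ``results in'' $\lim_{N\to\infty}R=\hat{R}$ without addressing that both quantities grow like $K\log_2 N$.
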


\begin{proof}
The proof is provided in {\bf Appendix~\ref{app:proof_coro_3}}.
\end{proof}
Therefore, as the number of antennas $N$ tends to infinity, the near-optimal spectrum efficiency performance could be ensured.

\subsection{Linear Distribution Analysis}\label{sec: per B}
Compared with SDMA, LDMA possesses extra resolution in the distance domain, which is potential to distinguish UEs in the same direction. The ability to serve single-path UEs residing in the same angle could be viewed as a feature that distinguishes itself from SDMA. As a result, we will investigate the spectrum efficiency improvement by considering a special scenario where UEs are linearly distributed.

First, we investigate a simple ULA system, in which three UEs are linearly distributed in a certain direction $\theta$. The UEs are located within the distance range of $[r_{\rm{min}}, r_{\rm{max}}]$. Without loss of generality, we assume $r_1 \leq r_2 \leq r_3$. Thus, the channels can be expressed as ${\bf{h}}(r_i, \theta)$, $i=1,2,3$. According to {\bf{Lemma}~\ref{lemma5}}, the analog precoder can be obtained as ${\bf{B}}_{\rm{TU}} = [{\bf{b}}(r_1, \theta),{\bf{b}}(r_2, \theta), {\bf{b}}(r_3, \theta)]$. The maximum spectrum efficiency of the three UEs system can be obtained based on the following assumptions:
\begin{enumerate}[(i)]
\item Single-path channel is adopted for each UE, and high SNR is assumed.
\item The interference of non-adjacent UEs is neglected, i.e. the interference of the $1^{\rm st}$ UE and $3^{\rm rd}$ UE is approximated by 0.
\item The analog precoder is determined by {\bf{Algorithm~\ref{alg:2}}}, and ZF digital precoder is adopted.
\end{enumerate}
Then, we can obtain 
\begin{equation}
\label{eq: T three users}
{\bf{T}} = {\bf{B}}^H {\bf{B}} = 
\left[
\begin{array}{ccc}
1 & \delta_{21}^* & 0 \\
\delta_{21} & 1 & \delta_{32}^* \\
0 & \delta_{32} & 1
\end{array}
\right],
\end{equation}
where $\delta_{ij} = {\bf{b}}^H(r_i, \theta) {\bf{b}}(r_j, \theta)$ denotes the inner product of the $i^{\rm th}$ and $j^{\rm th}$ beam focusing vectors. Suppose  the locations of the first and third UE are fixed, then the maximum achievable spectrum efficiency could be approximated by the following {\bf Lemma~\ref{lemma6}}.
\begin{lemma}
\label{lemma6}
We assuming $\delta_{21} = g(x)$ and $\delta_{32} = g(r_0 - x)$, where $g(x)$ is defined as $g(x): \mathbb{R}_+ \to \mathbb{C}$, and $|g(x)|^2$ is monotonically decreasing and convex, satisfying $0 \leq |g(x)| \leq 1$. Predefined parameter $r_0$ satisfies $r_0>0$. Then the spectrum efficiency defined in~\eqref{eq: EE single} satisfies 
\begin{equation}
\label{eq: three user max SE}
\begin{aligned}
R(x) \lesssim R(x)^{\rm aub} = 2\log_2 \left(1+\frac{P}{K\sigma_n^2} \frac{N |\alpha|^2 (1-2g(\hat{x}))}{(1-g(\hat{x}))} \right) \\
+ \log_2 \left(1+\frac{P}{K\sigma_n^2} N |\alpha|^2 (1-2g(\hat{x})) \right).
\end{aligned}
\end{equation}
The equality holds when $x = \hat{x}$ satisfying $g(\hat{x}) = g(r_0-\hat{x})$.
\end{lemma}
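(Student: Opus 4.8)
The plan is to reduce the three-user spectrum efficiency to a one-variable function of $x$ by combining {\bf Lemma~\ref{lemma5}} with the explicit inverse of the tridiagonal Gram matrix $\mathbf{T}$ in \eqref{eq: T three users}, and then to maximize it over $x$ using the monotonicity and convexity of $|g|^2$. First, since $\mathbf{T}$ in \eqref{eq: T three users} is tridiagonal, $\det\mathbf{T} = 1 - |\delta_{21}|^2 - |\delta_{32}|^2$, and a direct cofactor computation gives $[\mathbf{T}^{-1}]_{2,2}^{-1} = 1 - |\delta_{21}|^2 - |\delta_{32}|^2$, $[\mathbf{T}^{-1}]_{1,1}^{-1} = (1 - |\delta_{21}|^2 - |\delta_{32}|^2)/(1 - |\delta_{32}|^2)$ and $[\mathbf{T}^{-1}]_{3,3}^{-1} = (1 - |\delta_{21}|^2 - |\delta_{32}|^2)/(1 - |\delta_{21}|^2)$. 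Substituting $|\delta_{21}| = |g(x)|$, $|\delta_{32}| = |g(r_0-x)|$, $|\alpha_k| = |\alpha|$, and equal power into \eqref{eq: EE single} then yields $R$ as an explicit function $R(x)$ whose three summands are $\log_2(1+\gamma f_k(x))$ with $\gamma = \tfrac{P}{K\sigma_n^2}N|\alpha|^2$ and the common leakage factor $1-|g(x)|^2-|g(r_0-x)|^2$ appearing in every $f_k$ (in the numerator for $k=1,3$ and alone for $k=2$).

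Next I would carry out the maximization over the position $x\in(0,r_0)$ of the middle UE. Two structural facts drive it. (1) Symmetry: exchanging $x\leftrightarrow r_0-x$ swaps $\delta_{21}\leftrightarrow\delta_{32}$, hence swaps the $k=1$ and $k=3$ terms and fixes the $k=2$ term, so $R(x)=R(r_0-x)$; combined with $f_1=f_3$ at $x=r_0/2$ this makes $x=r_0/2$ a stationary point (a short first-derivative check shows $\partial f_1/\partial x=-\partial f_3/\partial x$ and $\partial f_2/\partial x=0$ there). (2) Convexity: writing $s(x)=|g(x)|^2+|g(r_0-x)|^2$, the hypothesis that $|g|^2$ is convex makes $s(\cdot)$ convex and symmetric about $r_0/2$, so $s(x)\ge s(r_0/2)$, i.e. the common factor $1-s(x)$ is maximized precisely at the balanced point $|g(\hat x)|=|g(r_0-\hat x)|$ (equivalently $\hat x=r_0/2$ when $|g|^2$ is strictly decreasing). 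I would then argue this interior stationary point is the global maximum — either by showing it is the unique interior critical point (the equation $R'(x)=0$, after clearing denominators, reduces to a relation among $g(x),g(r_0-x)$ and their derivatives whose only solution compatible with $g$ injective and $g'$ monotone is $x=\hat x$) or by noting $R$ degrades toward the endpoints as $\mathbf{T}$ loses positive definiteness. Finally, evaluating the three $f_k$ at $x=\hat x$, where $|\delta_{21}|^2=|\delta_{32}|^2=:\bar g$, gives $f_1=f_3=(1-2\bar g)/(1-\bar g)$ and $f_2=1-2\bar g$, which is exactly the two-term expression $R(x)^{\rm aub}$ in \eqref{eq: three user max SE} (with the symbol $g(\hat x)$ there read as $\bar g$), and equality holds at $x=\hat x$ as claimed. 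The remaining approximate inequality "$\lesssim$" simply absorbs modeling assumptions (i)--(iii): single-path channels, high SNR, and the neglected UE1--UE3 coupling that renders $\mathbf{T}$ exactly tridiagonal.

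The step I expect to be the main obstacle is establishing global maximality of the balanced point rather than just stationarity: although the shared factor $1-s(x)$ peaks at $\hat x$ by convexity, the factors $1/(1-|g(r_0-x)|^2)$ and $1/(1-|g(x)|^2)$ that enter the outer UEs' SINRs are \emph{not} individually maximized there, so the two effects pull in opposite directions as $x$ leaves $\hat x$ (the total leakage grows, which hurts, while the spread of $|g(x)|^2$ and $|g(r_0-x)|^2$ grows, which would help the $k=1,3$ terms). Showing that the convexity of $|g|^2$ is "strong enough" for the net effect to be a decrease is the crux; if a clean closed-form inequality proves elusive, I would fall back on differentiating $R(x)$, using $g$ decreasing (so $g$ injective) together with $g$ convex (so $g'$ monotone) to rule out asymmetric critical points, and then invoke the boundary behavior to conclude the symmetric point is the maximizer.
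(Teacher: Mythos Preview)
Your proposal is correct and lands on the same stationary point and the same evaluated bound, but it takes a different route from the paper. The paper's proof invokes the high-SNR assumption~(i) at the outset, using $\log_2(1+x)\approx\log_2(x)$ to collapse the three-term sum into $\log_2 L(x)+\text{const}$ with $L(x)=(1-g_1-g_2)^3/[(1-g_1)(1-g_2)]$; it then computes $dL/dx$ explicitly and checks that it vanishes when $g_1=g_2$, finally appealing to convexity to call $\hat x$ a local maximum. You instead keep the exact $R(x)$, obtain stationarity from the symmetry $R(x)=R(r_0-x)$, and use convexity of $s(x)=|g(x)|^2+|g(r_0-x)|^2$ to argue the common leakage factor $1-s(x)$ peaks at the balanced point. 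The paper's high-SNR reduction buys a genuinely one-variable calculus problem and a clean derivative check; your symmetry argument is more structural and does not need the approximation for stationarity, but---as you rightly flag---leaves the competing $1/(1-g_i)$ factors in the outer-UE terms to be handled for global maximality. On that last point you are no worse off than the paper: its proof also stops at ``local maximum'' with only a loose invocation of convexity, so your explicit identification of this as the main obstacle is accurate rather than a gap relative to the published argument.
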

\begin{proof}
The proof is provided in {\bf Appendix~\ref{app:proof_lemma6}}.
\end{proof}
\begin{remark}
It can be proved that the envelope of $G(\beta)$ satisfies the monotonically decreasing and convex constraint. Therefore, the maximum spectrum efficiency can be approximately obtained through {\bf{Lemma}~\ref{lemma6}} for a three-UE system. It is worth noting that, this conclusion is drawn under the assumption that interference from non-adjacent UEs is neglected. Therefore, this lemma can be viewed as an approximated upper bound, i.e., $R\lesssim R^{\rm aub}$ for real communication scenarios.
\end{remark}

Under the three assumptions, the same conclusion of {\bf{Lemma}~\ref{lemma6}} can be generalized into multi-user scenarios. Specifically, we consider a system where multiple UEs are linearly distributed. If we investigate any adjacent three UEs, the middle UE indexed by $i$ must be located in the position that makes the same interference on the two adjacent UEs to maximize the spectrum efficiency, which is to say $|{\bf{b}}_i^H{\bf{b}}_{i+1}| = |{\bf{b}}_i^H{\bf{b}}_{i-1}|$ for $i=2,3,\cdots,K-1$. Otherwise, the system spectrum efficiency can be further enhanced by adjusting the location of the middle UE. Then, when $K$ UEs are placed satisfying this condition, the maximum spectrum efficiency of multiple UEs linearly distributed can be obtained through the following lemma.
\begin{lemma}
\label{lemma7}
The approximated upper bound of the spectrum efficiency of multiple linearly distributed UEs can be expressed as
\begin{equation}
\label{eq: multiple users maximum}
\begin{aligned}
R \lesssim  R^{\rm{aub}} = \sum_{k=1}^K \log_2 \left(1+\frac{P}{K\sigma_n^2} \frac{N |\alpha_k|^2}{\gamma_k} \right),
\end{aligned}
\end{equation}
where $\gamma_k$ is determined by
\begin{equation}
\label{eq: gamma formulation}
\begin{aligned}
\gamma_k = \frac{(\chi_1 x_1^{k-2}+\chi_2x_2^{k-2})(\chi_1x_1^{K-k-1}+\chi_2x_2^{K-k-1})}{\chi_1x_1^{K-1}+\chi_2x_2^{K-1}},
\end{aligned}
\end{equation}
and $x_1, x_2$ are solutions to $x^2-x+|\delta|^2=0$, and $\chi_1 = -\frac{x_1^2}{x_2-x_1}$, $\chi_2 = \frac{x_2^2}{x_2-x_1}$, $|\delta|$ is defined as the min-max correlation between different UEs as $|\delta| = \mathop{\min}\limits_{r_1,\cdots,r_K} \mathop{\max}\limits_{i\neq j} |{\bf{b}}^H(r_i,\theta,\phi){\bf{b}}(r_j,\theta,\phi))|$.
\end{lemma}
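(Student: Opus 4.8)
The plan is to reduce the whole statement to a determinant computation for a tridiagonal Toeplitz matrix. Starting from {\bf Lemma~\ref{lemma5}}, the equal-power/ZF/perfect-CSI spectrum efficiency is $R=\sum_{k=1}^{K}\log_2\!\big(1+\tfrac{P}{K\sigma_n^2}\tfrac{N|\alpha_k|^2}{[({\bf{B}}^H{\bf{B}})^{-1}]_{k,k}}\big)$, so it suffices to show that the diagonal entries of $({\bf{B}}^H{\bf{B}})^{-1}$ are the $\gamma_k$ in~\eqref{eq: gamma formulation}. Under assumption (ii) the interference of non-adjacent UEs is dropped, so ${\bf{T}}={\bf{B}}^H{\bf{B}}$ is tridiagonal with unit diagonal; the equalized-placement condition $|{\bf{b}}_i^H{\bf{b}}_{i+1}|=|{\bf{b}}_i^H{\bf{b}}_{i-1}|$ for all interior $i$ forces every sub-/super-diagonal entry to share a common modulus, which, once the layout is fixed, is exactly the min-max value $|\delta|$. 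First I would strip the phases: because the interaction graph of a tridiagonal matrix is a path, there is a diagonal unitary $\bm{\Phi}$ with $\bm{\Phi}^H{\bf{T}}\bm{\Phi}$ a real symmetric tridiagonal Toeplitz matrix ${\bf{T}}_0$ carrying $1$ on the diagonal and $|\delta|$ off it, and since $\bm{\Phi}$ is diagonal this leaves $\mathrm{diag}({\bf{T}}^{-1})=\mathrm{diag}({\bf{T}}_0^{-1})$ unchanged.

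Next I would invoke the classical closed form for the inverse of a tridiagonal matrix, $[{\bf{T}}_0^{-1}]_{k,k}=\theta_{k-1}\,\phi_{k+1}/\theta_K$, where $\theta_i$ is the $i\times i$ leading principal minor of ${\bf{T}}_0$ and $\phi_i$ the trailing minor from index $i$ to $K$; the Toeplitz symmetry gives $\phi_{k+1}=\theta_{K-k}$. Laplace expansion along the last row yields $\theta_i=\theta_{i-1}-|\delta|^2\theta_{i-2}$ with $\theta_0=\theta_1=1$, whose characteristic polynomial is precisely $x^2-x+|\delta|^2$ with roots $x_1,x_2$ (hence $x_1+x_2=1$, $x_1x_2=|\delta|^2$). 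Writing $\theta_i=A x_1^{i}+B x_2^{i}$ and imposing the two initial conditions gives $A=-x_1/(x_2-x_1)$ and $B=x_2/(x_2-x_1)$; setting $D_j:=\theta_{j+1}=(Ax_1)x_1^{j}+(Bx_2)x_2^{j}$ then produces exactly $\chi_1=-x_1^2/(x_2-x_1)$ and $\chi_2=x_2^2/(x_2-x_1)$. Substituting $\theta_i=D_{i-1}$ back (with the natural convention $D_{-1}=\theta_0=1$ covering the end indices $k=1,K$) gives $\gamma_k=[{\bf{T}}_0^{-1}]_{k,k}=\theta_{k-1}\theta_{K-k}/\theta_K=D_{k-2}D_{K-k-1}/D_{K-1}$, which is~\eqref{eq: gamma formulation}; plugging this into {\bf Lemma~\ref{lemma5}} yields~\eqref{eq: multiple users maximum}. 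The ``$\lesssim$'' is inherited from the high-SNR and neglected-non-adjacent-interference approximations already present in {\bf Lemma~\ref{lemma5}}–{\bf Lemma~\ref{lemma6}} together with the approximate optimality of the equalized placement.

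The hard part will not be the algebra above — that is mechanical once the matrix is identified — but justifying the two structural inputs used to fix ${\bf{T}}_0$. The first is that the spectrum-efficiency-maximizing linear layout equalizes every pair of adjacent correlations: for $K=3$ this is {\bf Lemma~\ref{lemma6}}, resting on the monotone decrease and convexity of $|G(\cdot)|^2$ (equivalently $|g(\cdot)|^2$), and for general $K$ I would extend it by an exchange argument — if some interior UE had unequal correlations to its two neighbours, moving it so as to equalize them strictly increases the product-form objective extracted from~\eqref{eq: EE single}, by the same convexity estimate as in the three-UE case, so at the optimum all adjacent correlations agree. The second is that this common value coincides with $|\delta|$: since the beam correlation depends only on $|1/r_l-1/r_m|$ and is monotone decreasing, equalizing adjacent correlations is precisely the uniform-spacing (min-max) layout in the inverse-distance variable over the admissible range $[r_{\min},r_{\max}]$, so the equalized value cannot be pushed below, and is attained at, the min-max level. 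Making the exchange argument fully rigorous for general $K$ — in particular controlling how a single UE's displacement simultaneously perturbs $[({\bf{B}}^H{\bf{B}})^{-1}]_{k,k}$ at several indices — is the one genuine obstacle; everything downstream of the tridiagonal reduction is routine.
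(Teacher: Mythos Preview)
Your proposal is correct and follows essentially the same route as the paper: reduce ${\bf{T}}={\bf{B}}^H{\bf{B}}$ to a tridiagonal (Toeplitz) matrix under the equalized-placement/non-adjacent-interference assumptions, apply the classical tridiagonal-inverse formula $[{\bf{T}}^{-1}]_{k,k}=\theta_{k-1}\theta_{K-k}/\theta_K$, solve the minor recurrence $\theta_i=\theta_{i-1}-|\delta|^2\theta_{i-2}$ via its characteristic polynomial $x^2-x+|\delta|^2$, and then link the common correlation to the min-max value by monotonicity. Your extra phase-stripping step and the exchange-argument discussion for general $K$ are refinements the paper omits, but the overall strategy and algebra coincide.
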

\begin{proof}
The proof is provided in {\bf Appendix~\ref{app:proof_lemma7}}.
\end{proof}

The approximated upper bound of the spectrum efficiency will be further verified in the following section.

\section{Simulation Results}\label{sec: sim}
In this section, we evaluate the performance of the proposed LDMA scheme through numerical simulations. A single-cell communication scenario is considered, where multiple single-antenna UEs are served by one BS equipped with an ELAA. The carrier frequency is set to $30$ GHz and the array is half-wavelength spaced. Thus the spacing is $d = \lambda/2 = 0.5\,{\rm cm}$. The 512-element ULA and $256 \times 16$-element UPA are considered in simulations. To reveal the comprehensive performance of the LDMA scheme, two scenarios are considered for both ULA and UPA systems, including the uniformly distributed UEs and linearly distributed UEs.
\subsection{Linear Distribution Scenario for ULA Systems}\label{sec: sim linear}
To validate the conclusion of the approximated upper bound of spectrum efficiency for linearly distributed UEs, we first conduct a simulation where the UEs are aligned along the spatial angle $\phi = 0$. Each UE is located within the range of $[r_{\rm{min}}, r_{\rm{max}}]$, where $r_{\rm{min}} = 4\,{\rm m}$ and $r_{\rm{max}} = 150\,{\rm m}$. The number of UEs varies in $[1, 14]$. SNR is set to be $12\, {\rm{dB}}$. The infinite-size codebook and single-path channel are adopted as introduced in Section~\ref{sec: asy single}. 
\par The spectrum efficiency with increasing number of UEs is plotted in Fig.~\ref{img: linear maximum SE}. The red line and blue line represent the situation where UEs are placed to minimize the interference from adjacent UEs without and with non-adjacent (NA) interference, respectively. The red line denotes the approximated upper bound derived in~\eqref{eq: multiple users maximum} neglecting the NA interference, which is unreachable since NA interference is neglected. A reachable performance for the same user position is plotted in blue. An exhaustive search is performed to search for the reachable maximum spectrum efficiency, which is plotted in green. It shows that the approximated upper bound in~\eqref{eq: multiple users maximum} is tight for small number of UEs. Therefore, {\bf{Lemma}~\ref{lemma6}} provides a relatively accurate estimation of ideal spectrum efficiency.
\par In addition, the orange line denotes the spectrum efficiency for randomly and linearly distributed UEs. The black dashed line denotes the spectrum efficiency employing far-field SDMA. In classical far-field SDMA, only one UE could access to the BS since UEs are linearly distributed and thus the single-path channels are constructed based on the same steering vector. Therefore, since the large-scale fading is omitted, the spectrum efficiency becomes a constant in SDMA scheme. It shows that even randomly distributed UEs with LDMA also outperform far-field SDMA when the number of UEs is not very large.
\begin{figure}[!t]
	\centering
	\setlength{\abovecaptionskip}{0.cm}
	\includegraphics[width=3in]{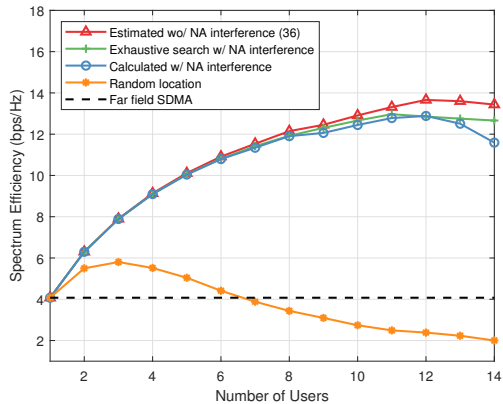}
	\caption{Spectrum effciency achieved under different assumptions.}
	\label{img: linear maximum SE}
    \vspace{-3mm}
\end{figure}
\par We consider a more realistic multi-path model with ULA where both the LoS and NLoS components exist. The UEs are located within the range of $[r_{\rm{min}}, r_{\rm{max}}]$ where $r_{\rm{min}} = 4$ m and $r_{\rm{max}} = 100$ m. The number of UEs is $K=4$ and the number of NLoS paths is $L=5$. The near-field polar-domain codebook introduced in ~\cite{Cui'22'tcom} is adopted for initial access as in {\bf{Algorithm}~\ref{alg:2}}. The baseline is the far-field DFT codebook adopted in~\cite{Heath'15'j} as classical far-field SDMA. Both ZF and WMMSE are considered for designing digital precoders for LDMA and SDMA. The spectrum efficiency against SNR is shown in Fig.~\ref{img: line ULA}. It can be shown that the proposed LDMA outperforms SDMA on both ZF and WMMSE scenarios. The WMMSE-based LDMA and ZF-based LDMA achieve about $60\%$ and $240\%$ performance gain compared with WMMSE- and ZF-based SDMA at SNR $=\,20$ dB, respectively. 
\par In addition, ZF-based fully-digital precoder assuming a perfect channel state information is also considered to reveal the performance upper bound. It can be shown that ZF-based fully-digital precoding scheme outperforms all other schemes in high SNR scenarios. Also, we consider a hybrid precoding method that approximates the fully-digital precoder as in~\cite{Yu'16'j}. The hybrid precoding is slightly worse than the fully-digital scheme, but it is hard to reach since perfect channel estimation is hard to obtain. The WMMSE-based LDMA also outperforms the ZF-based fully-digital SDMA in low SNR scenarios since ZF enlarges the noise. It is worth noting that the high spectrum efficiency in simulations is achieved through multi-stream parallel transmissions with multiple users. The spectrum efficiency for each user is several bps/Hz, and therefore could be supported with current protocols.
\par To summarize, the performance gain of LDMA is regarded to originate from two aspects. First, the BS could harvest the extra orthogonality in the distance domain to accommodate more users without significant interferences, which has been shown in Fig.~\ref{img: linear maximum SE}. Second, the near-field beamforming techniques could form higher beamforming gain compared with far-field beamforming, consequently improving the spectrum efficiency~\cite{Cui'22'tcom}. Both effects are coupled to contribute to the performance gain of LDMA.
\begin{figure}[!t]
	\centering
	\setlength{\abovecaptionskip}{0.cm}
	\includegraphics[width=3in]{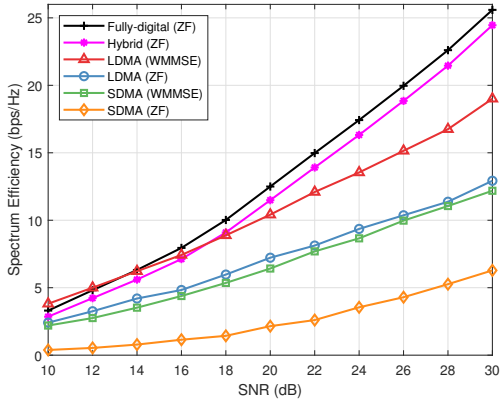}
	\caption{Comparison of the proposed LDMA and classical far-field multiple access under the linear distribution assumption.}
	\label{img: line ULA}
    \vspace{-3mm}
\end{figure}
\subsection{Uniform Distribution Scenario for ULA Systems}\label{sec: sim uniform}
\par Next, we consider a more general scenario where UEs are uniformly distributed in the cell. The UEs are assumed to be uniformly distributed in an area with a radius range $[4\,{\rm m}, 100\,{\rm m}]$ and spatial angle range $[-\pi/3$, $\pi/3]$, which is commonly adopted for one sector. The multi-path channel model is adopted with the number of NLoS paths $L=5$. The number of UEs is set to $K = 10$. As in the linear distribution scenario, both the ZF and WMMSE are adopted for designing the digital precoder. The simulation results are shown in Fig.~\ref{img: uniform ULA}, where the WMMSE-based LDMA achieves nearly $90\%$ improvement in spectrum efficiency at SNR $=\,20$ dB compared with classical WMMSE-based SDMA. Also, since the near-field precoding could be leveraged to manage inter-user interferences, the performance gap between LDMA with far-field SDMA increases as SNR increases. 
\begin{figure}[!t]
	\centering
	\setlength{\abovecaptionskip}{0.cm}
	\includegraphics[width=3in]{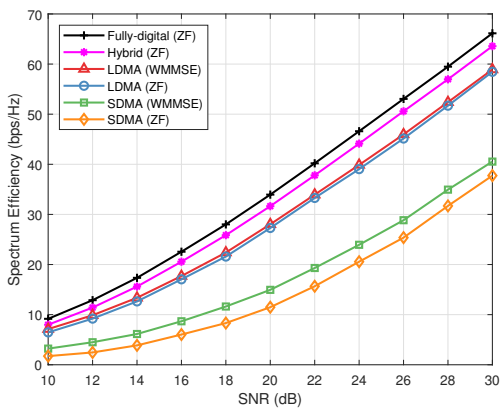}
	\caption{Comparison of the proposed LDMA and classical SDMA under the uniform distribution assumption.}
	\label{img: uniform ULA}
\end{figure}
\par To validate the performance of LDMA with different system parameters, we also conduct simulations adjusting the number of paths, Rician factor, and number of antennas. The spectrum efficiency against number of NLoS paths is plotted in Fig.~\ref{img: uniform ULA L}. It is shown that the performance of LDMA and SDMA both decrease as $L$ increases. The reason lies in that the analog precoding is designed based on the match-filter principle, which prefers the sparse channel model. In addition, LDMA outperforms SDMA for different $L$, which illustrates the robustness of LDMA for different scattering environments.
\begin{figure}[!t]
	\centering
	\setlength{\abovecaptionskip}{0.cm}
	\includegraphics[width=3in]{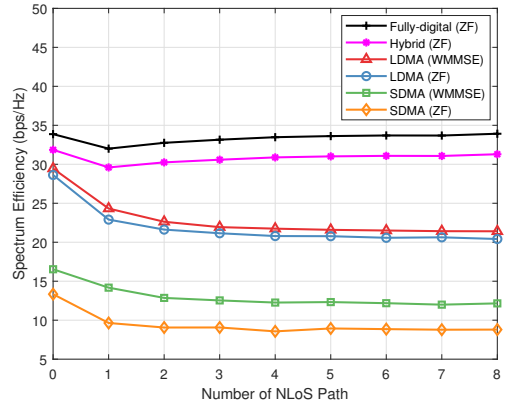}
	\caption{Comparison of the proposed LDMA and classical SDMA for different number of NLoS channels.}
	\label{img: uniform ULA L}
\end{figure}
\par Furthermore, the spectrum efficiency against different $\kappa$ is shown in Fig.~\ref{img: uniform ULA kapa}, where LDMA always outperforms SDMA for different $\kappa$. Since the analog precoder is based on the single-path channel, the spectrum efficiency performance increases as $\kappa$ increases, which results in a stronger LoS path component.
\begin{figure}[!t]
	\centering
	\setlength{\abovecaptionskip}{0.cm}
	\includegraphics[width=3in]{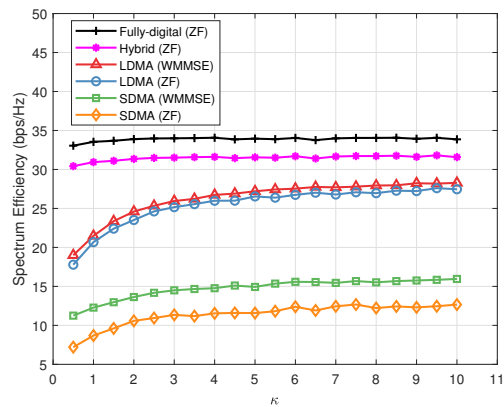}
	\caption{Comparison of the proposed LDMA and classical SDMA for different $\kappa$ of Rician channel.}
	\label{img: uniform ULA kapa}
\end{figure}
\par To show the trend of performance when the channel model changes from far-field to near-field, we perform a simulation where the number of antenna elements changes from $64$ to $512$. The simulation results are shown in Fig.~\ref{img: uniform ULA Nt}. It is shown that ZF- and WMMSE-based far-field SDMA deteriorates as the antenna array is enlarged, which contributes to a stronger near-field effect. The simulations show the necessity of employing LDMA in ELAA systems. In addition, the performance of near-field LDMA obtains a slight increase with the increasing array aperture, which is due to the stronger interference management of near-field focusing beams. 
\begin{figure}[!t]
	\centering
	\setlength{\abovecaptionskip}{0.cm}
	\includegraphics[width=3in]{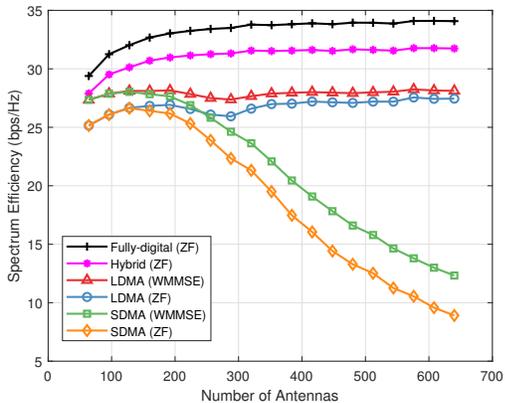}
	\caption{Comparison of the proposed LDMA and classical SDMA for different number of antenna elements.}
	\label{img: uniform ULA Nt}
\end{figure}
\subsection{Performance of UPA Systems}\label{sec: sim upa}
To verify the effectiveness of the proposed codebook and the effectiveness of LDMA in UPA systems, we consider an $N_1 \times N_2 = 256 \times 16$ UPA working at $30$ GHz. According to~\cite{Liu'17'j}, the multi-path components and angular spread in the elevation domain are limited. Therefore, the number of antennas along the elevation axis is less than the azimuth axis. Similar to ULA systems, UEs are assumed to be linearly or uniformly distributed in the space of $\theta \in [\pi/3, 2\pi/3]$, $\phi \in [-\pi/6, \pi/6]$, and $r \in [4\,{\rm m},\,50\,{\rm m}]$, with $L = 5$ and $\kappa = 8$. Threshold $\Delta = 0.55$ is employed to design the spherical-domain codebook. The spectrum efficiency for $4$ UEs is shown in Fig.~\ref{img: UPA}.
\par The simulation results draw a similar conclusion that the proposed LDMA outperforms classical SDMA under linear or uniform UE distribution scenarios. The proposed LDMA could achieve about $150\%$ and $50\%$ performance gain at SNR $=\,20$ dB over ZF-based SDMA under linear and uniform distribution scenarios, respectively. In addition, the uniform sampling method in~\cite{han'20'j} is employed to construct the near-field codebook for comparison. To ensure a fair comparison, the size of the spherical-domain codebook and uniform sampling codebook are kept the same. It also shows the effectiveness of the spherical-domain codebook over other near-field codebooks.

\begin{figure}[!t]
	\centering
	\subfigbottomskip=2pt
	\subfigcapskip=-5pt
	\subfigure[Linear distribution]{
		\label{level.sub.1}
		\includegraphics[width=3in]{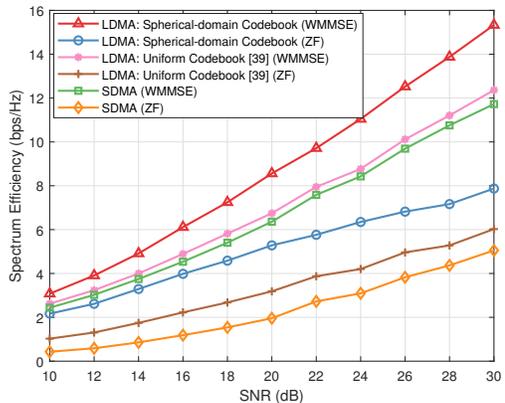}}
	\subfigure[Uniform distribution]{
		\label{level.sub.2}
		\includegraphics[width=3in]{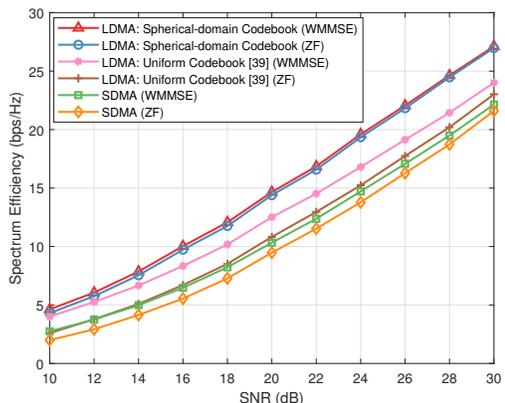}}
	\caption{Comparison of the proposed LDMA and classical SDMA for UPA systems.}
	\label{img: UPA}
    \vspace{-3mm}
\end{figure}

\section{Conclusions}\label{sec: conclusion}
In this paper, the concept of LDMA is first proposed, which leverages the near-field beam focusing property to mitigate interferences and significantly enhance spectrum efficiency. Similar to the asymptotic angular orthogonality of far-field beam steering vectors, the asymptotic orthogonality of near-field beam focusing in the distance domain is investigated. A spherical-domain sampling method is proposed for the UPA codebook design. Based on the near-field codebook, the novel LDMA scheme is investigated. Simulation results verify the superiority of proposed LDMA on spectrum efficiency over uniform and linear distribution scenarios. We hope the LDMA scheme could provide a new possibility for spectrum efficiency enhancement compared with classical SDMA in hybrid precoding. The proposed LDMA can contribute to fulfillments of the 10-fold increase in spectrum efficiency for future 6G ELAA systems. For future research, since the polarization of signals also needs to be carefully tuned to realize constructive superposition at receivers, a more practical precoding design considering spatially varied polarization settings at antennas needs to be investigated~\cite{myers2022near}. In addition, the relationship between the number of users and the increasing rate of the spectrum efficiency curve in large SNR conditions, which reveals the available spatial degrees of freedom in the near-field region, is also an important future research topic~\cite{emil'22}.

\appendices
\section{Proof of Corollary~\ref{coro2}}\label{app: coro2}
By assuming $\eta_1 = kd\left(-\sin\phi_l+\sin\phi_m\right)$ and $\eta_2 = kd^2\left(\frac{\cos^2\phi_l}{2r_l} - \frac{\cos^2\phi_m}{2r_m}\right)$, the correlation of near-field beam focusing vectors focusing on $(r_l,\theta_l)$ and $(r_m,\theta_m)$ is expressed as
\begin{equation}
\label{eq: correlation in 2D domain}
\begin{aligned}
&\left|{\bf{b}}_{\rm{L}}^H(r_l, \phi_l) {\bf{b}}_{\rm{L}}(r_m, \phi_m)\right|  = \frac{1}{N} \left| \sum_{n=-\widetilde{N}}^{\widetilde{N}} e^{jk \left(\psi_{{\rm{L}}, r_l, \phi_l}^{(n)} - \psi_{{\rm{L}}, r_m, \phi_m}^{(n)}\right)} \right|\\
& = \frac{1}{N} \left| \sum_{n=-\widetilde{N}}^{\widetilde{N}} e^{jknd\left(-\sin\phi_l+\sin\phi_m\right) + jkn^2d^2\left(\frac{\cos^2\phi_l}{2r_l} - \frac{\cos^2\phi_m}{2r_m}\right)} \right| \\
& = \frac{1}{N} \left| \sum_{n=-\widetilde{N}}^{\widetilde{N}} e^{j(\eta_1 n + \eta_2 n^2)} \right|.
\end{aligned}
\end{equation}
Then, we adopt the limitation of $N \to +\infty$ as
\begin{equation}
\label{eq: correlation in 2D domain limitation}
\begin{aligned}
&\lim_{N \to +\infty} |{\bf{b}}_{\rm{L}}^H(r_l, \phi_l) {\bf{b}}_{\rm{L}}(r_m, \phi_m)| \\
&= \lim_{N \to +\infty} \frac{1}{N} \left| \sum_{n=-\widetilde{N}}^{\widetilde{N}} e^{j(\eta_1 n + \eta_2 n^2)} \right| \\
& \mathop{\approx}\limits^{(a)} \lim_{N \to +\infty} \frac{1}{N} \left| \int_{-\widetilde{N}}^{\widetilde{N}} e^{j(\eta_1 n + \eta_2 n^2)} {\rm{d}}n  \right|,
\end{aligned}
\end{equation}
where approximation (a) is obtained by replacing summation with integral. By assuming $\eta_2 \neq 0$ and substituting $n + \frac{\eta_1}{2\eta_2}$ with $t$, we can obtain
\begin{equation}
\label{eq: correlation in 2D domain limitation 1}
\begin{aligned}
\lim_{N \to +\infty} |{\bf{b}}_{\rm{L}}^H(r_l, \phi_l) {\bf{b}}_{\rm{L}}(r_m, \phi_m)| = \lim_{N \to +\infty} \frac{1}{N} \left| \int_{-\infty}^{\infty} e^{j\eta_2 t^2} {\rm{d}}t  \right|.
\end{aligned}
\end{equation}
According to the Fresnel function~\cite{Sherman'62'j}, the limitation of the integral can be expressed as
\begin{equation}
\label{eq: correlation in 2D domain limitation limit 1}
\begin{aligned}
\int_{-\infty}^{\infty} e^{j\eta_2 t^2} {\rm{d}}t &= 2 \int_{0}^{\infty} \left(\cos(\eta_2t^2) +j\sin(\eta_2t^2) \right){\rm{d}}t\\
& = \sqrt{\frac{\pi}{2\eta_2}} (1 + j). 
\end{aligned}
\end{equation}
\par Therefore, adopting a similar process as in {\bf{Corollary~\ref{coro1}}}, we can derive the limitation of the correlation of two near-field beam focusing vectors as
\begin{equation}
\label{eq: correlation in 2D domain final 1}
\begin{aligned}
&\lim_{N \to +\infty} \left|{\bf{b}}_{\rm{L}}^H(r_l, \phi_l) {\bf{b}}_{\rm{L}}(r_m, \phi_m)\right|  = \lim_{N \to +\infty} \frac{1}{N} \sqrt{\frac{\pi}{\eta_2}} = 0.
\end{aligned}
\end{equation}

Otherwise, if $\eta_2 = 0$, it means the condition $\frac{\cos^2\phi_l}{2r_l} = \frac{\cos^2\phi_m}{2r_m}$ is satisfied. Recall that we assume $r_l \neq r_m$ or $\phi_l \neq \phi_m$. If $r_l \neq r_m$, then $\phi_l \neq \phi_m$ can also be ensured with $\frac{\cos^2\phi_l}{2r_l} = \frac{\cos^2\phi_m}{2r_m}$. Therefore, $\phi_l \neq \phi_m$ is ensured if $\eta_2 = 0$, where the correlation can be rewritten as
\begin{equation}
\label{eq: correlation in 2D domain final 2}
\begin{aligned}
\lim_{N \to +\infty} &\left|{\bf{b}}_{\rm{L}}^H(r_l, \phi_l) {\bf{b}}_{\rm{L}}(r_m, \phi_m)\right| = \lim_{N \to +\infty} \frac{1}{N} \left| \sum_{n=-\widetilde{N}}^{\widetilde{N}} e^{j\eta_1 n} \right|\\
& \mathop{=}\limits^{(b)} \lim_{N \to +\infty} \frac{1}{N}\left| \Xi_N(kd(-\sin\phi_l+\sin\phi_m)) \right|=0,
\end{aligned}
\end{equation}
where (b) is obtained according to~\eqref{eq: far gain}. Therefore, whether $\eta_2 = 0$ or not, the correlation converges to $0$, which completes the proof.

\section{Proof of Lemma~\ref{lemma2}}\label{app: lemma2}
Following the approximation in~\eqref{eq: near-field UPA distance term}, the correlation of two beam focusing vectors focusing on the same direction defined in~\eqref{eq: near lim 2D UPA} can be rewritten as
\begin{equation}
\label{eq: correlation for UPA}
\begin{aligned}
\quad  & \left|{\bf{b}}_{\rm{P}}^H(r_l, \phi, \theta) {\bf{b}}_{\rm{P}}(r_m, \phi, \theta)\right|  =\\
& \quad \quad \quad \frac{1}{N_1N_2} \left| \sum_{n_1}\sum_{n_2} e^{j(\zeta_1 (n_1-\zeta_2n_2)^2 + \zeta_3 n_2^2)} \right|,
\end{aligned}
\end{equation}
where $\zeta_1$, $\zeta_2$ and $\zeta_3$ are determined by
\begin{equation}
\label{eq: zeta}
\begin{aligned}
\zeta_1 &= kd^2\left(1-\cos^2\theta\right)\left(\frac{1}{2r_l} - \frac{1}{2r_m}\right)\\
\zeta_2 &= \frac{\sin\theta\cos\theta\sin\phi}{1-\cos^2\theta}\\
\zeta_3 &= kd^2\left(1-\sin^2\phi\right)\left(\frac{1}{2r_l} - \frac{1}{2r_m}\right).
\end{aligned}
\end{equation}
Then the limitation of the correlation can be approximated when $N_1 \to +\infty$ as
\begin{equation}
\label{eq: correlation for UPA limitation}
\begin{aligned}
& \lim_{N_1 \to +\infty} \left|{\bf{b}}_{\rm{P}}^H(r_l, \phi, \theta) {\bf{b}}_{\rm{P}}(r_m, \phi, \theta)\right| \\
&  \mathop{\approx}\limits^{(a)} \lim_{N_1 \to +\infty} \frac{1}{N_1N_2} \left| \int_{-\widetilde{N}_1}^{\widetilde{N}_1} \int_{-\widetilde{N}_2}^{\widetilde{N}_2} e^{j(\zeta_1 (n_1-\zeta_2n_2)^2 + \zeta_3 n_2^2)} {\rm{d}}n_1 {\rm{d}}n_2 \right|\\
& \mathop{=}\limits^{(b)} \lim_{N_1 \to +\infty} \frac{1}{N_1N_2} \left| \int_{-\widetilde{N}_2}^{\widetilde{N}_2} \left[\int_{-\widetilde{N}_1-\zeta_2t_2}^{\widetilde{N}_1-\zeta_2t_2} e^{j\zeta_1 t_1^2}{\rm{d}}t_1\right] e^{j\zeta_3 t_2^2}  {\rm{d}}t_2 \right|\\
& \mathop{=}\limits^{(c)} \lim_{N_1 \to +\infty} \frac{1}{N_1N_2} \sqrt{\frac{\pi}{2\zeta_1}} \left|(1+j)\int_{-\widetilde{N}_2}^{\widetilde{N}_2} e^{j\zeta_3 t_2^2}  {\rm{d}}t_2 \right|\\
& \mathop{\leq}\limits^{(d)} \lim_{N_1 \to +\infty} \frac{1}{N_1} \sqrt{\frac{\pi}{\zeta_1}} \left[ \frac{1}{N_2}\int_{-\widetilde{N}_2}^{\widetilde{N}_2} \left|e^{j\zeta_3 t_2^2}\right|  {\rm{d}}t_2\right] = 0,
\end{aligned}
\end{equation}
where approximation (a) is derived by replacing summation with integral. Equality (b) is derived by substituting $n_1-\zeta_2 n_2$ and $n_2$ with $t_1$ and $t_2$, respectively. Owing to the property of Fresnel function that $\int_0^{+\infty} \sin(\frac{\pi}{2}t^2) {\rm{d}}t = \int_0^{+\infty} \cos(\frac{\pi}{2}t^2) {\rm{d}}t = \frac{1}{2}$, it can be proved that $\lim_{N_1 \to +\infty} \int_{-\widetilde{N}_1-\zeta_2t_2}^{\widetilde{N}_1-\zeta_2t_2} e^{j\zeta_1 t_1^2}{\rm{d}}t_1 = \sqrt{\frac{\pi}{2\zeta_1}}(1+j)$. Therefore, equation (c) can be obtained. Approximation (d) is obtained by $|\int x {\rm{d}}x| \leq \int |x| {\rm{d}}x$, which completes the proof.

\section{Proof of Lemma~\ref{lemma4}}\label{app: proof of lemma4}
First, by substituting ${\widetilde{r}}_{l}^{(n_1,n_2)}$ with the approximation in (20), we can obtain
\begin{equation}
\label{eq: p_lemma4 1}
\begin{aligned}
& ~~~\left|{\widetilde{\bf{b}}}^H(r_l, \theta, \phi){\widetilde{\bf{b}}}(r_m, \theta, \phi)\right|\\
&= \left|\sum_{n_1} \sum_{n_2} e^{jk(n_1^2d^2(1-\cos^2\theta) + n_2^2d^2(1-\sin^2\theta\sin^2\phi))(\frac{1}{2r_l}-\frac{1}{2r_m})} \right|,
\end{aligned}
\end{equation}
where $n_1$ and $n_2$ are chosen from the range $[-\widetilde{N_1},\widetilde{N_1}]$ and $[-\widetilde{N_2},\widetilde{N_2}]$, respectively. Then, if we replace the summation with the integral and adopt the approximation $\widetilde{N_i} = \frac{N_i-1}{2} \approx \frac{N_i}{2}$ for $i=1,2$, the complicated summation in~\eqref{eq: p_lemma4 1} could be approximated as
\begin{equation}
\label{eq: p_lemma4 2}
\begin{aligned}
&~~~\left|\sum_{n_1} \sum_{n_2} e^{jk(n_1^2d^2(1-\cos^2\theta) + n_2^2d^2(1-\sin^2\theta\sin^2\phi))(\frac{1}{2r_l}-\frac{1}{2r_m})} \right|\\
& \approx \int_{-N_1/2}^{N_1/2} e^{j\frac{\pi}{2}\sqrt{2n_1^2d^2(1-\cos^2\theta)(\frac{1}{r_l}-\frac{1}{r_m})}^2} {\rm{d}}n_1\\
&~~~ \times \int_{-N_2/2}^{N_2/2} e^{j\frac{\pi}{2}\sqrt{2n_2^2d^2(1-\sin^2\theta\sin^2\phi)(\frac{1}{r_l}-\frac{1}{r_m})}^2} {\rm{d}}n_2\\
& \mathop{=}\limits^{(a)} \int_{0}^{\beta_1} \frac{e^{j\frac{\pi}{2}t_1^2}}{\beta_1}  {\rm{d}}t_1 \times \int_{0}^{\beta_2} \frac{e^{j\frac{\pi}{2}t_2^2}}{\beta_2} {\rm{d}}t_2,
\end{aligned}
\end{equation}
where (a) is derived with $\beta_1 = \sqrt{\frac{N_1^2d^2(1-\cos^2\theta)}{2}(\frac{1}{r_l}-\frac{1}{r_m})}$ and $\beta_2 = \sqrt{\frac{N_2^2d^2(1-\sin^2\theta\sin^2\phi)}{2}(\frac{1}{r_l}-\frac{1}{r_m})}$.
\par Finally, according to the definition of Fresnel functions $C(x) = \int_0^x \cos(\frac{\pi}{2}t^2){\rm{d}}t$ and $S(x) = \int_0^x \sin(\frac{\pi}{2}t^2){\rm{d}}t$, each integration in~\eqref{eq: p_lemma4 2} could be further rewritten as
\begin{equation}
\label{eq: p_lemma4 3}
\begin{aligned}
\int_{0}^{\beta_1} \frac{e^{j\frac{\pi}{2}t_1^2}}{\beta_1} {\rm{d}}t_1 &= \frac{C(\beta_1)+jS(\beta_1)}{\beta_1}\\
& = G(\beta_1),
\end{aligned}
\end{equation}
where $G(\beta) = \frac{C(\beta)+jS(\beta)}{\beta}$. Therefore, we can obtain $\left|{\widetilde{\bf{b}}}^H(r_l, \theta, \phi){\widetilde{\bf{b}}}(r_m, \theta, \phi)\right| \approx |G(\beta_1)G(\beta_2)|$, which completes the proof.

\section{Proof of Lemma~\ref{lemma5}}\label{app: achievable rate}
Owing to the single-path assumptions, the optimal analog precoder is ${\bf{F}}_{\rm{A}} = {\bf{B}} = [{\bf{b}}(r_1, \theta_1, \phi_1), \cdots, {\bf{b}}(r_K, \theta_K, \phi_K)]$. Employing ZF digital precoder, the effetive channel gain is written as 
\begin{equation}
\label{eq: effective channel gain}
\begin{aligned}
|{\bf{h}}_k^H {\bf{F}}_{\rm{A}} {\bf{f}}_{D,l}| &= \left[{\bf{H}} {\bf{F}}_{\rm{A}} {\bf{F}}_{\rm{D}}\right]_{k,l}\\
&= \left[{\overline{\bf{H}}}\ {\overline{\bf{H}}}^H ({\overline{\bf{H}}}\ {\overline{\bf{H}}}^H)^{-1} {\bf{\Lambda}}\right]_{k,l}\\
&= \left[{\bf{\Lambda}}\right]_{k,l}.
\end{aligned}
\end{equation}
Since matrix ${\bf{\Lambda}}$ is diagonal, the inter-user interference could be perfectly eliminated and the channel gain for the $k^{th}$ UE is expressed by ${\bf{\Lambda}}_{k,k}$. ${\bf{\Lambda}}$ is determined by $\|{\bf{F}}_{\rm{A}}{\bf{f}}_{{\rm{D}},k}\|_2^2 = 1$, which can be rewritten as
\begin{equation}
\label{eq: lambda}
\begin{aligned}
\|{\bf{F}}_{\rm{A}}{\bf{f}}_{{\rm{D}},k}\|_2^2 = &{\bf{\Lambda}}_{k,:} ({\overline{\bf{H}}}{\overline{\bf{H}}}^H)^{-1} {\overline{\bf{H}}} {\bf{F}}_{\rm{A}}^H {\bf{F}}_{\rm{A}} {\overline{\bf{H}}}^H ({\overline{\bf{H}}}{\overline{\bf{H}}}^H)^{-1} {\bf{\Lambda}}_{:,k}\\
=& {\bf{\Lambda}}_{k,:} ({\bf{D}}^*)^{-1} ({\bf{B}}^H{\bf{B}}{\bf{B}}^H{\bf{B}})^{-1} {\bf{B}}^H{\bf{B}}\\
 & \times {\bf{B}}^H{\bf{B}}{\bf{B}}^H{\bf{B}}({\bf{B}}^H{\bf{B}}{\bf{B}}^H{\bf{B}})^{-1} {\bf{D}}^{-1} {\bf{\Lambda}}_{:,k} \\
=& \left[{\bf{D}}_{k,k}\right]^{-2} \left[{\bf{\Lambda}}_{k,k}\right]^2 \left[({\bf{B}}^H{\bf{B}})^{-1}\right]_{k,k},
\end{aligned}
\end{equation}
where ${\bf{D}} = \sqrt{N}{\rm{diag}}[\alpha_1,\cdots,\alpha_K]$. To meet the power constraint $\|{\bf{F}}_{\rm{A}}{\bf{f}}_{{\rm{D}},k}\|_2^2=1$, the diagonal entities of ${\bf{\Lambda}}$ could be expressed as
\begin{equation}
\label{eq: lambda2}
\begin{aligned}
{\bf{\Lambda}}_{k,k} = \sqrt{\frac{N}{\left[({\bf{B}}^H{\bf{B}})^{-1}\right]_{k,k}}} |\alpha_k|.
\end{aligned}
\end{equation}
Substituting ${\bf{\Lambda}}$ into~\eqref{eq: spectrum efficiency}, we can obtain~\eqref{eq: EE single}, which completes the proof.

\section{Proof of Corollary~\ref{coro3}}\label{app:proof_coro_3}
From equation~\eqref{eq: EE single}, the spectrum efficiency is determined by the diagonal elements of $({\bf{B}}^H{\bf{B}})^{-1}$. We first show that when the number of antenna elements tends to infinity, ${\bf{B}}^H{\bf{B}} = {\bf{I}}$ with probability one. The off-diagonal $(i,j)$ element of ${\bf{B}}^H{\bf{B}}$ is written as ${\bf{b}}^H(r_i, \theta_i, \phi_j) {\bf{b}}(r_j, \theta_j, \phi_j)$. Assume the random variable $(r, \theta, \phi)$ is uniformly distributed within a 3D sphere with arbitrary radius $r_{\rm{cell}}$. According to {\bf Corollary {\ref{coro2}}}, the correlation of different beam focusing vectors are asymptotically orthogonal, leading to
\begin{equation}
\label{eq: lambda3}
\begin{aligned}
\lim_{N \to \infty} {\bf{b}}^H(r_i, \theta_i, \phi_j) {\bf{b}}(r_j, \theta_j, \phi_j) \mathop{=}\limits^{\rm a.s.} 0, ~~\forall i\neq j.
\end{aligned}
\end{equation}
Consequently, $\lim_{N \to \infty} {\bf{B}}^H{\bf{B}} \mathop{=}\limits^{\rm a.s.} {\bf{I}}$ can be ensured, which results in that $\lim_{N \to \infty} R \mathop{=}\limits^{\rm a.s.} \hat{R}$. This completes the proof. 

\section{Proof of Lemma~\ref{lemma6}}\label{app:proof_lemma6}
With ${\bf{T}}$ defined in~\eqref{eq: T three users}, we omit the variable $x$ and use $g_1$, $g_2$ to represent $|g(x)|^2$ and $|g(r_0-x)|^2$, respectively. Then, the diagonal elements of $[{\bf{B}}^H{\bf{B}}]^{-1}$ can be written as
\begin{equation}
\label{eq: three proof 1}
\begin{aligned}
[{\bf{B}}^H{\bf{B}}]_{1,1}^{-1} &= \frac{1-g_2}{1-g_1-g_2} \\
[{\bf{B}}^H{\bf{B}}]_{2,2}^{-1} &= \frac{1}{1-g_1-g_2}\\
[{\bf{B}}^H{\bf{B}}]_{3,3}^{-1} &= \frac{1-g_1}{1-g_1-g_2}.
\end{aligned}
\end{equation}
Substiting the diagonal elements in~\eqref{eq: three proof 1} into~\eqref{eq: EE single} and defining $\frac{PN}{K\sigma_n^2}$ as $\tau$, the spectrum efficiency could be rewritten as
\begin{equation}
\label{eq: three proof 2}
\begin{aligned}
R  &= \log_2 \left(1+ \tau \frac{1-g_1-g_2}{1-g_2} \right) + \log_2 \left(1+ \tau(1-g_1-g_2) \right)\\
& + \log_2 \left(1+ \tau\frac{1-g_1-g_2}{1-g_1} \right)\\
&\mathop{\approx}\limits^{(a)} \log_2 \left(\frac{(1-g_1-g_2)^3}{(1-g_1)(1-g_2)} \right) + \log_2(\tau^3),
\end{aligned}
\end{equation}
where approximation (a) is obtained based on the high SNR assumption $\log_2(1+x) \approx \log_2(x)$. To search for the maximum spectrum efficiency, we focus on the function inside the logarithmic function $L(x) = \frac{(1-g_1-g_2)^3}{(1-g_1)(1-g_2)}$. The derivatives of $L(x)$ can be written as 
\begin{equation}
\label{eq: three proof 3}
\begin{aligned}
\quad \frac{{\rm{d}}L}{{\rm{d}}x} =& \frac{(g_1+g_2-1)^2}{(g_1-1)^2(g_2-1)^2} \Big[g_1'(g_2-1)(-2g_1+g_2+2)\\
& + g_2'(g_1-1)(g_1-2g_2+2)\Big].
\end{aligned}
\end{equation}
Note that $g_1(x) = g_2(r_0-x)$ holds and $g_1(x)$ is monotonically decreasing. If $\hat{x}$ satisfies $g_1(\hat{x}) = g_2(\hat{x})$, the equation $g_1'(\hat{x}) = -g_2'(\hat{x})$ also holds. Therefore, $\frac{{\rm{d}}L}{{\rm{d}}x}|_{x=\hat{x}} = 0$ can be ensured. Owing to the convexity of $g(x)$, $\hat{x}$ is one local maximum point, which completes the proof.

\section{Proof of Lemma~\ref{lemma7}}\label{app:proof_lemma7}
For multiple linearly distributed UEs, the correlation matrix ${\bf{T}}$ can be written as
\begin{equation}
\label{eq: multiple users proof}
{\bf{T}} = {\bf{B}}^H {\bf{B}} = 
\left[
\begin{matrix}
1 & \delta^* & & & & \\
\delta & 1 & \delta^* & &  &\\
 & \delta & \ddots & \ddots &\\
 & & \ddots & \ddots & \delta^* &\\
 & & & \delta& 1& \\
\end{matrix}
\right],
\end{equation}
where $\delta$ denotes the correlation of any adjacent UEs as $\delta = {\bf{b}}_i^H{\bf{b}}_{i+1}$ for $i \in \{1,2,\cdots, K-1\}$. According to the conclusion of the inversion of a tridiagonal matrix \cite{usmani'1994'}, the $k^{th}$ diagonal element of ${\bf{T}}$ can be written as
\begin{equation}
\label{eq: multiple users proof2}
\begin{aligned}
[{{\bf{T}}}]_{k,k}^{-1} = \gamma_k = \frac{\theta_{k-1}\theta_{K-k}}{\theta_K},
\end{aligned}
\end{equation}
where $\theta_k$ follows $\theta_i = \theta_{i-1}-|\delta|^2\theta_{i-2}$, $\theta_{-1}=0$ and $\theta_{0}=1$. Derived from the recurrence formula, $\theta_i$ has the general solution as $\theta_i = \chi_1 x_1^{i-1} + \chi_2 x_2^{i-1}$ where $\chi_1 = -\frac{x_1^2}{x_2-x_1}$, $\chi_2 = \frac{x_2^2}{x_2-x_1}$. $x_1$ and $x_2$ are solutions to the equation $x^2-x+|\delta|^2=0$. Since $|{\bf{T}}|={\bf{B}}^H {\bf{B}} \geq 0$ should be ensured for any $K$, $|\delta|^2 \leq \frac{1}{2}$ is constrained, which ensures that $x_1$ and $x_2$ are real. Substituting the expression of $\theta_i$, the equation of (\ref{eq: gamma formulation}) could be obtained. Moreover, it can be proved that $[{{\bf{T}}}]_{k,k}^{-1}$ is monotonically increasing as $|\delta|$ increases. Therefore, the spectrum efficiency is maximized when $|\delta|$ is minimized for all $K$ UEs as $|\delta| = \mathop{\min}\limits_{r_1,\cdots,r_K} \mathop{\max}\limits_{i\neq j} |{\bf{b}}^H(r_i,\theta,\phi){\bf{b}}(r_j,\theta,\phi))|$. This completes the proof.

\footnotesize
\balance 
\bibliographystyle{IEEEtran}
\bibliography{IEEEabrv,reference}
\vspace{-1cm}
\begin{IEEEbiography}[{\includegraphics[width=1in,height=1.25in,clip,keepaspectratio]{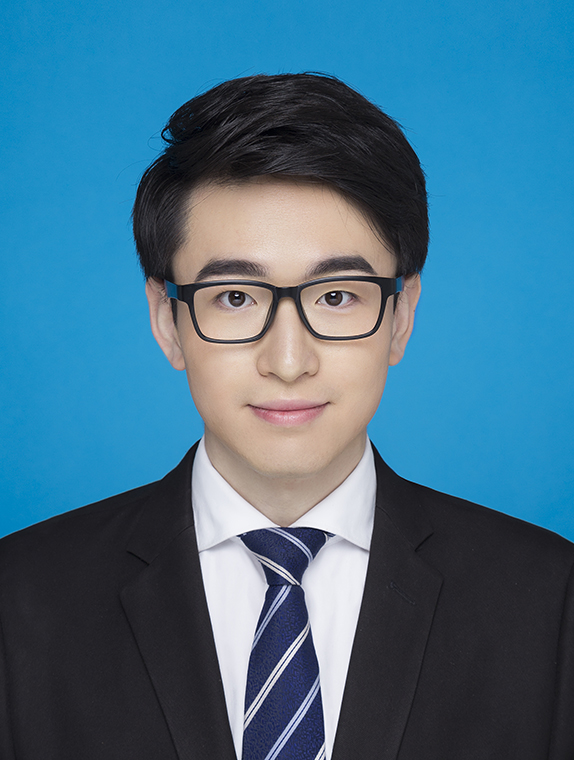}}]{Zidong Wu}
    received the B.E. degree in electronic engineering from Tsinghua University, Beijing, China, in 2019. He is currently pursuing the Ph.D. degree in Department of Electronic Engineering at Tsinghua University, Beijing, China. His research interests include massive MIMO, mmWave communications, machine learning for wireless communications and near-field communications. He has received the Honorary Mention of IEEE ComSoc Student Competition in 2019 and IEEE ICC Oustanding Demo Award in 2022.
\end{IEEEbiography}
\vspace{-1cm}
\begin{IEEEbiography}[{\includegraphics[width=1in,height=1.25in,clip,keepaspectratio]{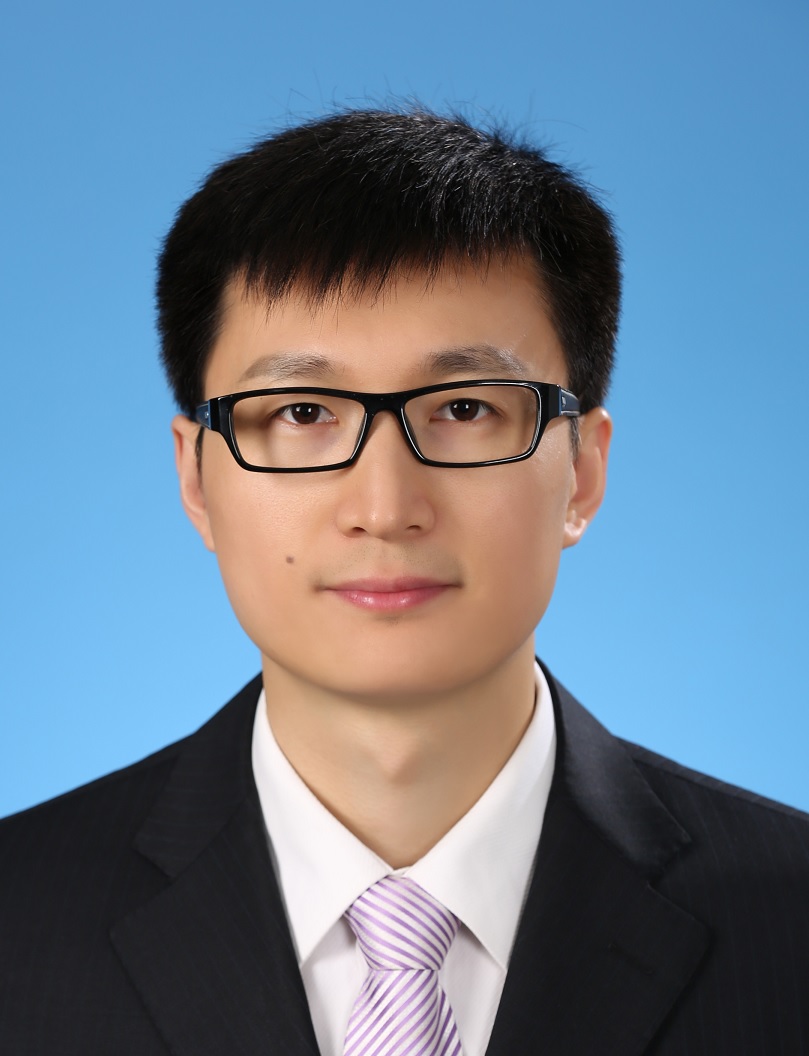}}]{Linglong Dai}
    (Fellow, IEEE) received the B.S. degree from Zhejiang University, Hangzhou, China, in 2003, the M.S. degree from the China Academy of Telecommunications Technology, Beijing, China, in 2006, and the Ph.D. degree from Tsinghua University, Beijing, in 2011. From 2011 to 2013, he was a Post-Doctoral Researcher with the Department of Electronic Engineering, Tsinghua University, where he was an Assistant Professor from 2013 to 2016, an Associate Professor from 2016 to 2022, and has been a Professor since 2022. His current research interests include massive MIMO, reconfigurable intelligent surface (RIS), millimeter-wave and Terahertz communications, wireless AI, and electromagnetic information theory.
    \par He has coauthored the book {\it MmWave Massive MIMO: A Paradigm for 5G} (Academic Press, 2016). He has authored or coauthored over 70 IEEE journal papers and over 40 IEEE conference papers. He also holds 19 granted patents. He has received five IEEE Best Paper Awards at the IEEE ICC 2013, the IEEE ICC 2014, the IEEE ICC 2017, the IEEE VTC 2017-Fall, and the IEEE ICC 2018. He has also received the Tsinghua University Outstanding Ph.D. Graduate Award in 2011, the Beijing Excellent Doctoral Dissertation Award in 2012, the China National Excellent Doctoral Dissertation Nomination Award in 2013, the URSI Young Scientist Award in 2014, the IEEE Transactions on Broadcasting Best Paper Award in 2015, the Electronics Letters Best Paper Award in 2016, the National Natural Science Foundation of China for Outstanding Young Scholars in 2017, the IEEE ComSoc Asia-Pacific Outstanding Young Researcher Award in 2017, the IEEE ComSoc Asia-Pacific Outstanding Paper Award in 2018, the China Communications Best Paper Award in 2019, the IEEE Access Best Multimedia Award in 2020, the IEEE Communications Society Leonard G. Abraham Prize in 2020, the IEEE ComSoc Stephen O. Rice Prize in 2022, and the IEEE ICC Best Demo Award in 2022. He was listed as a Highly Cited Researcher by Clarivate Analytics in 2020-2022. He was elevated as an IEEE Fellow in 2022.
    \par He is currently serving as an Area Editor of the IEEE Communications Letters, and an Editor of the IEEE Transactions on Wireless Communications. He has also served as an Editor of the IEEE Transactions on Communications (2017-2021), an Editor of the IEEE Transactions on Vehicular Technology (2016-2020), and an Editor of the IEEE Communications Letters (2016-2020). He has also served as a Guest Editor of the IEEE Journal on Selected Areas in Communications, IEEE Journal of Selected Topics in Signal Processing, IEEE Wireless Communications, etc. Particularly, he is dedicated to reproducible research and has made a large amount of simulation codes publicly available.
\end{IEEEbiography}

\end{document}